\def\doi{7 (2:12) 2011}
\title{%
Model Checking CTL is Almost Always Inherently Sequential\rsuper*%
}
\author[O.~Beyersdorff]{Olaf~Beyersdorff\rsuper a}
\address{{\lsuper{a,b,e,f}}Theoretical Computer Science, Leibniz
University of Hannover, Germany}
\email{\{beyersdorff, meier, thomas, vollmer\}@thi.uni-hannover.de}%
\author[A.~Meier]{Arne~Meier\rsuper b}
\address{\vskip-6 pt}
\author[M.~Mundhenk]{Martin~Mundhenk\rsuper c}
\address{{\lsuper c}Computer Science, University of Jena, Germany}
\email{martin.mundhenk@uni-jena.de}
\author[T.~Schneider]{Thomas~Schneider\rsuper d}
\address{{\lsuper d}Computer Science, Saarland University, Germany}
\email{schneider@ps.uni-saarland.de}
\author[M.~Thomas]{Michael~Thomas\rsuper e}
\address{\vskip-6 pt}
\author[H.~Vollmer]{Heribert~Vollmer\rsuper f}
\address{\vskip-6 pt}
\thanks{
Supported in part by grants DFG VO 630/6-1, VO 630/6-2, DAAD-ARC
D/08/08881, and BC-ARC 1323.
}
\subjclass{D.2.4, F.3.1, I.2.2, I.2.4}
\keywords{Model checking, temporal logic, complexity.}
\begin{document}

\begin{revision}
  This is a revised and corrected version of the article originally
published on May 17, 2011.
\end{revision}

\begin{abstract}
The model checking problem for \CTL is known to be \P-complete
(Clarke, Emerson, and Sistla (1986)\nocite{clemsi86}, see
Schnoe\-be\-len (2002)\nocite{schnoeb02}).  
We consider fragments of \CTL obtained by
restricting the use of temporal modalities or
the use of negations%
---restrictions already studied
for \LTL by Sistla and Clarke (1985)\nocite{sicl85} and Markey
(2004)\nocite{mar04}.  
For all these fragments, except for the
trivial case without any temporal operator, we systematically prove model
checking to be either inherently sequential (\P-complete) or 
very efficiently parallelizable (\LOGCFL-complete).
For most fragments, however, model checking for \CTL is already
\P-complete.
Hence our results indicate that, in cases where the combined complexity is of relevance,
approaching \CTL model checking by parallelism cannot be expected to result in any significant speedup.
% TS: Einer der TIME-Gutachter hatte sehr schoen erklaert, warum wir diesen Schluss nicht so ziehen duerfen wie im auskommentierten Satz, siehe auch letzten auskommentierten Absatz.
% Hence our results indicate that in most
% applications, approaching \CTL model checking by parallelism
% will not result in the desired speed up.

We also completely determine the complexity of the model checking problem
for all fragments of the extensions \ECTL, \CTLplus, and \ECTLplus.
\end{abstract}

\maketitle
\vfill

\section{Introduction} \label{sect:intro}

Temporal logic was introduced by Pnueli \cite{pnu77} as a formalism to
specify and verify properties of concurrent programs. Computation Tree
Logic (\CTL), the logic of branching time, goes back to Emerson and
Clarke \cite{emcl82} and contains temporal operators for expressing
that an event occurs at some time in the future ($\F$), always in the
future ($\G$), in the next point of time ($\X$), always in the future
until another event holds ($\U$), or as long as it is not released by the occurrence
of another event ($\R$), as well as path quantifiers
($\E,\A$) for speaking about computation paths. 
The full language obtained by these operators and quantifiers is called
\CTLstar \cite{emha86}.
In \CTL, the interaction between the temporal operators and path quantifiers is
restricted.  
The temporal operators in \CTL
are obtained by path quantifiers followed directly by any temporal operator,
\emph{e.g.}, $\AF$ and $\AU$ are \CTL-operators.
Because they start with the universal path quantifier,
they are called \emph{universal \CTL-operators}.
Accordingly, $\EX$ and $\EG$ are examples for \emph{existential \CTL-operators}.

Since properties are largely verified automatically, the computational
complexity of reasoning tasks is of great interest. 
Model checking (MC)---the problem of verifying whether a given formula holds in a
state of a given model---is one of the most important reasoning tasks
\cite{schnoeb02}. 
It is intractable for \CTLstar (\PSPACE-complete \cite{emle87,schnoeb02}),
but tractable for \CTL 
(complete for polynomial time \cite{clemsi86,schnoeb02}).

Although model checking for \CTL is tractable, its \P-hardness means
that it is presumably not efficiently parallelizable. We therefore
search for fragments of \CTL with a model checking problem of
lower complexity.  
We will consider all subsets of \CTL-operators, 
and examine the complexity of the
model checking problems for all resulting fragments of \CTL.
Further, we consider three additional restrictions
affecting the use of negation and 
study the extensions \ECTL, \CTLplus, and their combination \ECTLplus.

The complexity of model checking for fragments of temporal logics has been
examined in the literature: Markey \cite{mar04} considered satisfiability and model checking for
fragments of Linear Temporal Logic (\LTL). Under systematic
restrictions to the temporal operators, the use of negation, and the interaction of
future and past operators, Markey classified the two decision problems into
\NP-complete, \coNP-complete, and \PSPACE-complete.
Further, \cite{bamuscscscvo07entcs} examined model checking for
all fragments of \LTL obtained by 
restricting
the set of temporal operators and propositional connectives. 
The resulting classification separated cases where model checking 
is tractable from those where it is intractable. For model checking paths in \LTL  
an $\AC{1}(\LOGDCFL)$ algorithm is presented in \cite{KuhFin09}.

Concerning \CTL and its extension \ECTL, our results in this paper 
show that most restricted versions of the model checking problem exhibit
the same hardness as the general problem.
More precisely, we show that apart from the trivial case where
\CTL-operators are completely absent, the complexity of \CTL model
checking is a dichotomy: it is either $\P$-complete or
$\LOGCFL$-complete. 
Unfortunately, the latter case only occurs for a few rather weak
fragments and hence there is not much hope that in practice, model
checking can be sped up by using parallelism---it is inherently sequential.

Put as a simple rule, 
model checking for \CTL is \P-complete
for every fragment that allows to express a universal \emph{and} an
existential \CTL-operator. 
Only for fragments involving the operators $\EX$ and $\EF$ (or
alternatively $\AX$ and $\AG$) model checking is \LOGCFL-complete. 
This is visualized in Figure~\ref{fig:ctl-mc-BF} in Section~\ref{sect:conclusion}.
Recall that \LOGCFL is defined as the class of problems logspace-reducible
to context-free languages,
and $\NL\subseteq \LOGCFL \subseteq \NC2  \subseteq \P$.
Hence, in contrast to inherently sequential \P-hard tasks, 
problems in \LOGCFL have very efficient parallel algorithms.

For the extensions $\CTLplus$ and $\ECTLplus$, the situation is more complex.
In general, model checking $\CTLplus$ and $\ECTLplus$ is
$\DeltaPtwo$-complete \cite{lamasc01}. 
We show that for $T \subseteq \{\A,\E,\X\}$, both model checking problems restricted to operators
from $T$ remain tractable, while for $T \nsubseteq \{\A,\E,\X\}$, they become
$\DeltaPtwo$-complete.
Yet, for negation restricted fragments with only existential or only universal path quantifiers, 
we observe a complexity decrease to $\NP$- resp.\ $\co\NP$-completeness.

This paper is organized as follows: 
Section~\ref{sect:prelim} introduces \CTL, its model checking problems, 
and the non-basics of complexity theory we use. 
Section~\ref{sect:ctl-mc} contains our main results,
separated into upper and lower bounds. 
We also provide a
refined analysis of the reductions between different model checking
problems with restricted use of negation. 
The results are then generalized to extensions of \CTL in Section~\ref{sect:extensions}.
Finally, Section~\ref{sect:conclusion} concludes with a graphical
overview of the results. 
%For brevity, some proofs are omitted and will be included in the full version of this paper.

\section{Preliminaries} \label{sect:prelim}

  \subsection{Temporal Logic}
    We inductively define $\CTLstar$-formulae as follows. 
    Let $\Phi$ be a finite set of atomic propositions.
    The symbols used are the atomic propositions in $\Phi$,  
    the constant symbols $\true$, $\false$, 
    the Boolean connectives $\neg$, $\land$, and $\lor$,
    and the temporal operator symbols $\A$, $\E$, $\X$, $\F$,
    $\G$, $\U$, and $\R$.

    $\A$ and $\E$ are called a \emph{path quantifiers}, temporal operators aside
    from $\A$ and $\E$ are \emph{pure temporal operators}. 
    The atomic propositions and the constants $\true$ and $\false$ 
    are \emph{atomic formulae}.  
    There are two kinds of formulae, \emph{state formulae} and \emph{path formulae}.
    Each atomic formula is a state formula, and each state formula is
    a path formula.
    If $\varphi, \psi$ are state formulae and $\chi, \pi$ are path formulae,
    then $\lnot\varphi$, $(\varphi\land\psi)$, $(\varphi\lor\psi)$, 
    $\A\chi$, $\E\chi$ are state formulae,
    and $\lnot\chi$, $(\chi\land\pi)$, $(\chi\lor\pi)$, $\X\chi$,
    $\F\chi$, $\G\chi$, $[\chi\U\pi]$, and $[\chi\R\pi]$
    are path formulae. 
    The set of $\CTLstar$-formulae (or \emph{formulae}) consists of all state formulae. 

    A \emph{Kripke structure} is a triple $K=(W,R,\eta)$, 
        where $W$ is a finite set of states, 
        $R \subseteq W \times W$ a total relation 
              (\emph{i.e.}, for each $w \in W$, there exists a $w'$ such that $(w,w') \in R$), 
        and $\eta \colon W \limplies \powerset{\Phi}$ is a labelling function. 
    A \emph{path} $x$ is an infinite sequence $x=(x_1, x_2, \ldots)
    \in W^\omega$ such that $(x_i,x_{i+1}) \in  R$, for all $i\geq 1$.
    For a path $x=(x_1, x_2, \ldots)$ we denote by $x^i$ the path $(x_i,x_{i+1},\ldots)$.

    Let $K=(W,R,\eta)$ be a Kripke structure, $w \in W$ be a state, and
    $x=(x_1,x_2,\dots) \in W^\omega$ be a path. 
    Further, let $\varphi, \psi$ be state formulae and $\chi, \pi$ be path formulae.
    The truth of a $\CTLstar$-formula w.r.t.\ $K$ is inductively defined as follows:
    \begin{tabbing}
      \hspace{\parindent}\=$M,x$ \= $\models (\varphi\land\psi)$ \= iff \= $ K,w $ \= \kill
      \>$K,w$\>$\models \true           $   \> always,  \\
      \>$K,w$\>$\models \false          $   \> never,  \\
      \>$K,w$\>$\models p               $   \> iff \> $p \in \Phi$ and $p \in \eta(w)$, \\
      \>$K,w$\>$\models \neg\varphi     $   \> iff \> $K,w$ \> $\not\models \varphi$, \\
      \>$K,w$\>$\models (\varphi\land\psi)$ \> iff \> $K,w$ \> $\models \varphi$ and $K,w \models \psi$, \\
      \>$K,w$\>$\models (\varphi\lor\psi)$  \> iff \> $K,w$ \> $\models \varphi$ or $K,w \models \psi$, \\
      \>$K,w$\>$\models \A\chi          $   \> iff \> $K,x$ \> $\models \chi$ for all paths $x=(x_1,x_2,\ldots)$ with $x_1=w$, \\
      \>$K,x$\>$\models \varphi           $ \> iff \> $K,x_1$ \>      $\models \varphi$, \\
      \>$K,x$\>$\models \neg\chi        $   \> iff \> $K,x$ \> $\not\models \chi$, \\
      \>$K,x$\>$\models (\chi   \land\pi )$ \> iff \> $K,x$ \> $\models \chi$ and $K,x \models \pi$, \\
      \>$K,x$\>$\models (\chi   \lor\pi )$  \> iff \> $K,x$ \> $\models \chi$ or $K,x \models \pi$, \\
      \>$K,x$\>$\models \X\chi       $      \> iff \> $K,x_2$ \>  $\models \chi$ \\
      \>$K,x$\>$\models [\chi\U\pi]  $ \> iff \> there is a $k\in \N$ such that $K,x^k \models \pi$ and $K,x^i \,\models \chi$ for $1\leq i<k$.
    \end{tabbing}
    The semantics of the remaining temporal operators is defined via
    the equivalences:
      $\E\chi\equiv\lnot\A\lnot\chi$,
      $\F\chi\equiv [\true\U\chi]$, 
      $\G\chi\equiv\lnot\F\lnot\chi$, and 
      $[\chi\R\pi] \equiv \neg [\neg \chi \U \neg \pi]$.
    A state formula $\varphi$ is 
    \emph{satisfied by a Kripke structure $K$} if there exists $w \in W$ such that
    $K,w \models \varphi$. We will also denoted this by $K\models\varphi$.  

    We use $\CTLstar(T)$ to denote the set of $\CTLstar$-formulae
    using the Boolean connectives $\{\land, \lor, \neg\}$,
    and the temporal operators in $T$ only.
    If $T$ does not contain any quantifiers,
    then including any pure temporal operators in $T$ is meaningless.

    A \emph{$\CTL$-formula} is a $\CTLstar$-formula 
    in which each path quantifier is followed by exactly one pure temporal operator 
    and each pure temporal operator is preceded by exactly one path quantifier.
    The set of $\CTL$-formulae forms a strict subset of the set of all $\CTLstar$-formulae.
    For example, $\AG\EF p $ is a $\CTL$-formula, but $\A(\G\F p \land \F q)$ is not. 
    $\CTL$ is less expressive than $\CTLstar$ \cite{emha85,emha86}.

    Pairs of path quantifiers and pure temporal operators are 
    called \emph{$\CTL$-operators}.  
    The operators $\AX$, $\AF$, $\AG$, $\AU$, and $\AR$
    are \emph{universal} $\CTL$-operators,
    and  $\EX$, $\EF$, $\EG$, $\EU$, and $\ER$ are \emph{existential} $\CTL$-operators. Let $\ALL$ denote the set of all universal and existential \CTL-operators.
    Note that $\A[{\psi}\U{\chi}] \equiv \AF\chi \land \neg \E[{\neg \chi}\U{(\neg \psi \land \neg \chi)}]$, 
    and thus $\E[{\psi}\R{\chi}] \equiv \EG\chi \lor \E[{\chi}\U{(\psi \land \chi)}]$.
    Hence $\{\AX, \AF, \AR\}$ is a minimal set of 
    operators for $\CTL$     (in presence of all Boolean connectives), 
    whereas $\{\AX,$ $\AG,$ $\AU\}$ is not %complete for $\CTL$
    \cite{laroussinie95}.

    By $\CTL(T)$ we denote the set of $\CTL$-formulae using 
    the connectives $\{\land, \lor, \neg\}$ and 
    the \CTL-operators in $T$ only.
    Figure~\ref{fig:CTL-ops-lattice} shows the structure of
    sets of \CTL-operators with respect to their expressive power.
    \begin{figure}
      \centering
      \includegraphics[width=0.5\linewidth]{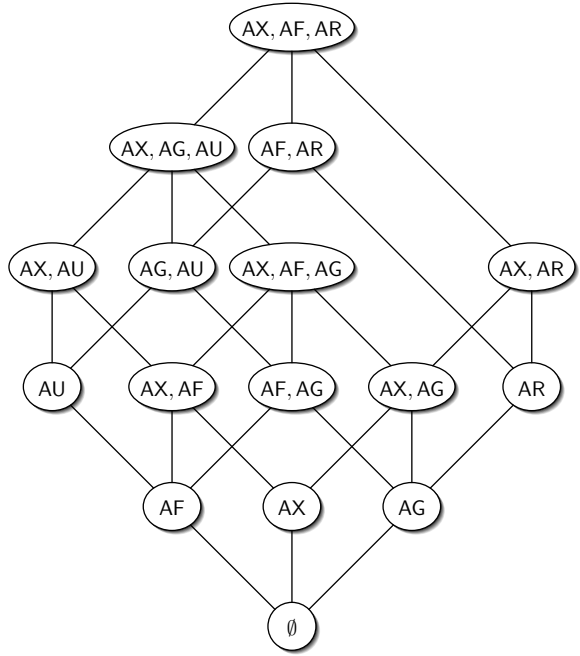}
      \caption{\label{fig:CTL-ops-lattice}%
        The expressive power of $\CTL(T)$.
      }
    \end{figure}

    Moreover, we define the following fragments of $\CTL(T)$.

    \begin{iteMize}{$-$}
      \item
        $\CTLpos(T)$ \quad (positive) \\
        \CTL-operators may not occur in the scope of a negation.
        \par\medskip
     \item
        $\CTLan(T)$ \quad (atomic negation) \\
        Ne\-ga\-tion signs appear only directly in front of atomic propositions.
        \par\medskip
      \item
        $\CTLmon(T)$ \quad (monotone) \\
        No negation signs allowed.
    \end{iteMize}

\noindent This restricted use of negation was introduced and studied
in the context of linear temporal logic, \LTL, by Sistla and Clarke
\cite{sicl85} and Markey \cite{mar04}.  Their original notation was
$\widetilde{\mathrm{L}}(T)$ for $\CTLan(T)$ and $\mathrm{L}^+(T)$ for
$\CTLpos(T)$.
  
  \subsection{Model Checking} 
  Now we define the \emph{model checking problems} 
  for the above mentioned fragments of \CTL.
  Let $\mathcal{L}$ be \CTL, \CTLmon, \CTLan, or \CTLpos.
  \begin{center}  
    \begin{tabularx}{\linewidth}{@{\hspace*{1em}}rX@{\hspace*{1em}}}
      \textit{Problem}:& $\mathcal{L}\text{-}\MC(T)$ \\
      \textit{Input}:  & A Kripke structure $K=(W,R,\eta)$, a state $w\in W$, and an $\mathcal{L}(T)$-formula $\varphi$. \\  
      \textit{Question}: & Does $K,w \models \varphi$ hold?\\
    \end{tabularx}
  \end{center}

  \subsection{Complexity Theory}\label{subsect:complexity}

    We assume familiarity with standard notions of complexity theory 
    as introduced in, \emph{e.g.}, \cite{pap94}.
    Next we will introduce the notions from circuit complexity that we use for our results.
All reductions in this paper are \leqcd-reductions defined as follows: 
A language $A$ is \emph{constant-depth reducible} to $B$, $A\leqcd B$, if there is a logtime-uniform $\AC{0}$-circuit family with oracle gates for $B$ that decides membership in $A$.
That is, there is a circuit family $\mathcal{C}=(C_1,C_2,C_3,\dots)$ such that

\begin{iteMize}{$-$}
\item for every $n$, $C_n$ computes the characteristic function of $A$ for inputs of length $n$,
\item there is a polynomial $p$ and a constant $d$ such that for all input lengths $n$, the size of $C_n$ is bounded by $p(n)$ and the depth of $C_n$ is bounded by $d$,
\item each circuit $C_n$ consists of unbounded fan-in AND and OR
  gates, negation gates, and gates that compute the characteristic
  function of $B$ (the  
  \emph{oracle gates}), 
\item there is a linear-time Turing machine $M$ 
  that can check the structure of the circuit family, \emph{i.e.}, 
  given a tuple $\langle n,g,t,h\rangle$ where $n,g,h$
are binary numbers and
$t\in\{\mathrm{AND},\mathrm{OR},\mathrm{NOT},\mathrm{ORACLE}\}$, $M$
accepts if $C_n$ contains a gate $g$ of type $t$ with predecessor
$h$.
\end{iteMize}

\noindent Circuit families $\mathcal{C}$ with this last property are called
\emph{logtime-uniform} (the name stems from the fact that the time
needed by $M$ is linear in the length of its input tuple, hence
logarithmic in $n$). For background information we refer to \cite{revo97,vol99}. 

We easily obtain the following relations between model checking for
fragments of \CTL with restricted negation:
\begin{lem}\label{prop:simple-cdreds-mc}\label{lem:atomic-negation=no-negation}
  For every set $T$ of \CTL-operators, we have 
  \[
    \CTLmonMC(T)\leqcd\CTLanMC(T) \leqcd\CTLposMC(T).
  \]
  Further, for model checking, atomic negation can be eluded, that is, $\CTLanMC(T) \leqcd \CTLmonMC(T)$.
\end{lem}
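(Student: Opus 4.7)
The lemma really packages two observations: a trivial syntactic inclusion for the first chain, and a slightly clever labelling trick for the last reduction.

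For the first chain $\CTLmonMC(T)\leqcd\CTLanMC(T)\leqcd\CTLposMC(T)$, my plan is to simply note that the syntactic classes satisfy $\CTLmon(T)\subseteq\CTLan(T)\subseteq\CTLpos(T)$: a formula without any negations trivially has all its negations in front of atoms; and if every negation sits directly in front of an atomic formula then no \CTL-operator is in the scope of a negation. Hence each reduction is the identity map on instances, which is computable by a logtime-uniform $\AC{0}$-circuit family.

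For the nontrivial direction $\CTLanMC(T)\leqcd\CTLmonMC(T)$, I would absorb the atomic negations into the Kripke structure by a standard relabelling trick. Given $(K,w,\varphi)$ with $K=(W,R,\eta)$ and $\varphi\in\CTLan(T)$, I would introduce for each atomic proposition $p$ occurring in $\varphi$ a fresh atomic proposition $\overline{p}$, and define $K'=(W,R,\eta')$ by
\[
  \eta'(v) \;=\; \eta(v) \;\cup\; \{\,\overline{p} \mid p \in \Phi(\varphi) \text{ and } p \notin \eta(v)\,\}
\]
for every state $v\in W$. Then I would obtain $\varphi'$ from $\varphi$ by replacing every literal $\neg p$ with $\overline{p}$, together with the trivial replacements $\neg\true\mapsto\false$ and $\neg\false\mapsto\true$. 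A straightforward induction on the structure of subformulae shows that $K,w\models\varphi$ iff $K',w\models\varphi'$, and by construction $\varphi'\in\CTLmon(T)$ uses only the \CTL-operators in $T$.

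The main thing to check is that this transformation can indeed be carried out by a logtime-uniform $\AC{0}$-circuit family. This is routine: the states and transitions of $K'$ are literally those of $K$; each bit of $\eta'(v)$ is either copied from $\eta(v)$ (for old propositions) or is the negation of a single input bit of $\eta(v)$ (for the new $\overline{p}$); and the rewriting of $\varphi$ replaces each occurrence of $\neg$ together with its (atomic) successor by a single new symbol, a purely local operation on the formula's syntax tree. There is no real obstacle here; the only point that requires a moment of thought is that we must make the replacements $\neg\true\mapsto\false$ and $\neg\false\mapsto\true$ in addition to $\neg p\mapsto\overline{p}$, since the definition of $\CTLan$ permits negations in front of all atomic formulae, not only atomic propositions.
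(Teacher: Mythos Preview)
Your proof is correct and follows exactly the paper's approach: identity reductions for the chain of inclusions, and the relabelling trick (fresh proposition for each negated atom, complemented labelling) for $\CTLanMC(T)\leqcd\CTLmonMC(T)$. One minor remark: the paper defines $\CTLan$ as allowing negations only in front of atomic \emph{propositions}, not arbitrary atomic formulae, so your extra clauses for $\neg\true$ and $\neg\false$ are unnecessary (though harmless).
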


  \begin{proof}
    The first part is straightforward, using the identity function as reduction function.
    For the second part,
    let $K=(W,R,\eta)$ be a Kripke structure and let $\varphi$ be 
    a $\CTLan(T)$-formula over the propositions $\Phi=\{p_1,\ldots,p_n\}$.
    Every negation in $\varphi$ appears % directly in front of an atomic proposition, \emph{i.e.} every negation appears 
    inside a negative literal.
    We obtain $\varphi'$ by replacing every negative literal
    $\neg p_i$ with a fresh atomic proposition $q_i$. 
    Further define $K'=(W,R,\eta')$, where $\eta'(w)=\eta(w) \cup \{ q_i \mid p_i \notin \eta(w)\}$.
    Obviously, $K,w \models \varphi$ iff $K',w \models \varphi'$ for all $w \in W$.
    The mapping $(K,w, \varphi)\mapsto (K',w,\varphi')$ can be performed by an \AC0-circuit.
  \end{proof}

In Section~\ref{subsect:negation}, we complete the picture by proving
$\CTLposMC(T)\leqcd\!\CTLmonMC(T)$.

The class $\P$ consists of all languages that have a polynomial-time decision algorithm. 
A problem is $\P$-complete if it is in $\P$ and every other problem in $\P$ reduces to it. 
$\P$-complete problems are sometimes referred to as \emph{inherently
sequential}, because $\P$-complete problems most likely
(formally: if $\P\neq\NC{}$) do not possess $\NC{}$-algorithms,
that is, algorithms running in polylogarithmic time on a parallel
computer with a polynomial number of processors.
Formally, $\NC{}$ contains all problems solvable by polynomial-size
polylogarithmic-depth logtime-uniform families of circuits with 
bounded fan-in AND, OR, NOT gates.

There is an $\NC{}$-algorithm for parsing context-free languages, that is, $\CFL\subseteq\NC{}$. 
Therefore, complexity theorists have studied the class $\LOGCFL$ of all problems reducible to context-free languages 
(the name ``\LOGCFL'' refers to the original definition of the class in terms of logspace-reductions, 
however it is known that the class does not change if instead, as everywhere else in this paper, $\leqcd$-reductions are used). 
Hence, $\LOGCFL\subseteq\NC{}$ (even $\LOGCFL\subseteq\NC{2}$, the second level of the \NC{}-hierarchy, where the depth of the occurring circuits is restricted to $O(\log^2n)$).
The class $\LOGCFL$ has a number of different maybe even somewhat surprising characterizations, 
\emph{e.g.}, languages in $\LOGCFL$ are those that can be decided by nondeterministic Turing machines operating in polynomial time 
that have a worktape of logarithmic size and additionally a stack whose size is not bounded.

More important for this paper is the characterization of $\LOGCFL$ as those problems computable by $\SAC{1}$ circuit families, that is, families of circuits that

\begin{iteMize}{$-$}
\item have polynomial size and logarithmic depth,
\item consist of unbounded fan-in OR gates and bounded fan-in AND gates and negation gates, but the latter are only allowed at the input-level,
\item are logtime-uniform (as defined above).
\end{iteMize}

Since the class $\LOGCFL$ is known to be closed under complementation, the
second condition can equivalently be replaced to allow unbounded
fan-in AND gates and restrict the fan-in of OR gates to be bounded. 

To summarize:
\[
  \NC1\subseteq L\subseteq\NL\subseteq\LOGCFL=\SAC{1}\subseteq\NC{2};
\]
and problems in these classes possess very efficient parallel algorithms: they can be solved in time $O(\log^2n)$ on a parallel machine with a tractable number of processors. 
For more background on these and related complexity classes, we refer the reader to \cite{vol99}.

\section{Model Checking CTL and CTL$_\text{pos}$} \label{sect:ctl-mc}

This section contains our main results on the complexity of model
checking for $\CTL$ and $\CTLpos$. 
We defer the analysis of the fragments $\CTLan$ and $\CTLmon$ to
Section~\ref{subsect:negation}, where we will see that their
model-checking problems are computationally 
equivalent to model checking for $\CTLpos$.

While model checking for $\CTL$ in general is known to be polynomial
time solvable and in fact $\P$-complete \cite{clemsi86,schnoeb02}, we
improve the lower bound by showing that only one temporal operator is
sufficient to obtain hardness for $\P$.

\begin{thm}\label{thm1}
  For each nonempty set $T$ of $\CTL$-operators, $\CTLMC(T)$ is
  \P-complete.
  If $T=\emptyset$, then $\CTLMC(T)$ is \NC1-complete.
\end{thm}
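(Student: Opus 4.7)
The plan has three pieces. Membership in $\P$ is inherited from $\CTLMC\in\P$ (Clarke--Emerson--Sistla), so $\CTLMC(T)\in\P$ for every $T$. The case $T=\emptyset$ is easy: a $\CTL(\emptyset)$-formula is a Boolean formula over $\Phi$, so $\CTLMC(\emptyset)$ is $\leqcd$-equivalent to the Boolean Formula Value Problem and therefore $\NC1$-complete by Buss's theorem; one direction just passes $\varphi$ together with the labelling of the queried state, the other embeds the Boolean instance into a single-state Kripke structure.

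For $T\neq\emptyset$ the real work is a $\P$-hardness reduction from a $\P$-complete alternating problem such as Alternating Graph Accessibility or Monotone Circuit Value. The key leverage is that $\CTLMC$ retains full negation, so each operator $op\in T$ comes with its de Morgan dual already expressible: $\neg\EX\neg\varphi\equiv\AX\varphi$, $\neg\EF\neg\varphi\equiv\AG\varphi$, $\neg\EG\neg\varphi\equiv\AF\varphi$, and analogously for the binary pairs $\EU/\AR$ and $\ER/\AU$. By monotonicity of the problem in $T$ it suffices to prove hardness for each singleton; after collapsing de Morgan pairs, five cases remain, one per pure temporal operator $\X$, $\F$, $\G$, $\U$, $\R$. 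In each case I would build a Kripke structure that mimics the source DAG (states correspond to gates, transitions follow wires, true-input leaves carry a distinguished proposition $p$) together with a $\CTL$-formula whose depth matches the alternation depth of the source instance. For the $\X$-case this is the alternating-modality formula $\EX\AX\EX\AX\cdots p$ applied to a circuit padded to strict AND/OR alternation by depth. For the recursive operators the built-in fixpoint character already provides unbounded reachability or invariance, so the analogous alternating formula (e.g.\ $\EF\AG\EF\AG\cdots p$ for the $\F/\G$ pair) works after a mild layered guard that synchronises the alternation with the intended level.

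The main obstacle is the $\X$-case, which lacks implicit recursion and forces the formula's syntactic depth to grow with the alternation depth of the source circuit: to pull $\P$-hardness through, the reduction must output formulas of linear depth in the instance size, and we must still verify that the overall map is computable by a logtime-uniform $\AC{0}$ family so as to qualify as a $\leqcd$-reduction. Once the $\X$-case is carried out, the other four are routine variants of the same idea, and hardness for every nonempty $T$ then follows by monotonicity in $T$.
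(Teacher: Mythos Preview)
Your plan is correct and matches the paper's approach almost exactly: the paper also takes $\P$-membership from Clarke--Emerson--Sistla, handles $T=\emptyset$ via the Boolean formula value problem, and for nonempty $T$ uses de~Morgan duality (available thanks to full negation) to obtain an existential/universal pair from any single operator, then reduces from a $\P$-complete alternating problem (in their case, the word problem for alternating logspace Turing machines rather than AGAP or MCVP) by building a layered Kripke structure with depth-marking propositions and a nested $\psi_1(\psi_2(\cdots\psi_{p(n)}(t)\cdots))$ formula. Your worry about the $\X$-case---polynomial formula depth while keeping the reduction in $\AC{0}$---is real but unproblematic, and the paper dispatches it with the same observation you make; the only cosmetic difference is that the paper first proves hardness for the monotone fragment with an explicit existential/universal pair and then invokes duality as a corollary, whereas you invoke duality up front.
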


If we consider only formulae from $\CTLpos$, 
where no \CTL-operators are allowed inside the scope of a negation, 
the situation changes and the complexity of model checking exhibits a dichotomous behavior.
As long as $\EG$ or $\AF$ are expressible the model checking problem remains \P-complete.
Otherwise, its complexity drops to $\LOGCFL$.

\begin{thm}\label{thm2}
Let $T$ be any set of \CTL-operators. Then $\CTLposMC(T)$ is

\begin{iteMize}{$-$}
\item
  \NC1-complete if $T=\emptyset$,
\item
  \LOGCFL-complete if $\emptyset\subsetneq T \subseteq \{\EX,\EF\}$ 
   or $\emptyset\subsetneq T\subseteq \{\AX,\AG\}$, and 
\item
  \P-complete otherwise.
\end{iteMize}
\end{thm}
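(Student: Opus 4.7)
The plan is to treat the three cases of the statement separately. The case $T=\emptyset$ reduces, as in Theorem~\ref{thm1}, to propositional formula evaluation on the labels $\eta(w)$, which is $\NC1$-complete. The two ``weak'' cases $T\subseteq\{\EX,\EF\}$ and $T\subseteq\{\AX,\AG\}$ are symmetric: the first is handled using the $\SAC{1}$ characterization of $\LOGCFL$ with unbounded fan-in $\mathrm{OR}$ and bounded fan-in $\mathrm{AND}$; the second uses the equivalent dual characterization (justified by closure of $\LOGCFL$ under complementation, recalled in Section~\ref{subsect:complexity}) with unbounded fan-in $\mathrm{AND}$ and bounded fan-in $\mathrm{OR}$. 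I describe the existential case below; the universal case is entirely analogous.

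For the $\LOGCFL$ upper bound on $\emptyset\subsetneq T\subseteq\{\EX,\EF\}$, I would construct a logtime-uniform $\SAC{1}$ circuit that labels every pair $(v,\psi)$, for $v$ a state of $K$ and $\psi$ a subformula of $\varphi$, with the truth value of $\psi$ at $v$. Propositional subformulae are handled by balanced parse trees (in $\NC{1}\subseteq\LOGCFL$); $\EX\psi$ at $v$ is the unbounded $\mathrm{OR}$ over successors $v'$ of the value of $\psi$ at $v'$, which is an $\SAC{1}$ primitive; and $\EF\psi$ at $v$ is directed reachability in $K$ restricted to the set of $\psi$-states, which lies in $\NL\subseteq\LOGCFL$. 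Composing these bottom-up using the standard closure of $\LOGCFL$ under $\leqcd$-reductions yields the desired $\SAC{1}$ circuit. Positivity of the input formula is essential, since iterated negations around $\EF$-subformulae would break the $\SAC{1}$ shape.

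For $\LOGCFL$-hardness of $\CTLposMC(T)$ with nonempty $T\subseteq\{\EX,\EF\}$, the strongest statement is hardness already for $T=\{\EX\}$. I would reduce from the circuit value problem for logtime-uniform $\SAC{1}$ circuits, which is $\LOGCFL$-complete under $\leqcd$ by the $\LOGCFL=\SAC{1}$ characterization. Given such a circuit $C$ on input $x$, take as Kripke structure the gates of $C$, with the transition relation pointing from each gate to its input gates and the labelling recording which leaves are $1$-inputs. A formula in $\CTLpos(\{\EX\})$ is then assembled by unfolding $C$ from output to leaves: unbounded fan-in $\mathrm{OR}$-gates become $\EX$ applied to a disjunction over their children, and bounded fan-in $\mathrm{AND}$-gates become conjunctions of $\EX$-subformulae. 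The bounded fan-in of $\mathrm{AND}$-gates in $\SAC{1}$ together with the logarithmic depth of $C$ keep the total formula size polynomial, and the whole reduction is clearly computable by an $\AC{0}$-circuit.

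Finally, for the $\P$-completeness part, $\P$-membership is inherited from $\CTLMC$ (Theorem~\ref{thm1}). For $\P$-hardness it suffices, by monotonicity of the problem in $T$, to treat (i) $T=\{\alpha\}$ for each $\alpha\in\{\EG,\AF,\EU,\AU,\ER,\AR\}$ and (ii) $T=\{\beta,\gamma\}$ with $\beta\in\{\AX,\AG\}$ and $\gamma\in\{\EX,\EF\}$, since every ``otherwise'' $T$ subsumes one of these. In each case I would reduce from the Monotone Circuit Value Problem, encoding gates as states and using the available operator(s) to propagate truth values up the circuit without ever using negation. The main obstacle is to exhibit, for each of the ten sub-cases, a \emph{positive} gadget realising the full $\mathrm{AND}/\mathrm{OR}$ alternation of MCVP: for the strong operators this exploits unbounded path quantification (so that, \emph{e.g.}, $\EG$ or $\EU$ alone can enforce truth along an entire computation branch of the gadget), while for the mixed weak case the reduction must alternate $\beta$ and $\gamma$ along the layers of the encoded circuit. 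Once the gadgets are pinned down, $\leqcd$-uniformity is routine since the construction is local in the input gate.
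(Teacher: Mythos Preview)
Your $\LOGCFL$ upper-bound argument for $T\subseteq\{\EX,\EF\}$ has a real gap. You propose to build an $\SAC{1}$ circuit that labels every pair $(v,\psi)$ bottom-up, appealing to ``closure of $\LOGCFL$ under $\leqcd$-reductions'' to glue the layers together. But the number of layers is the nesting depth of $\varphi$, which may be linear in $|\varphi|$: each $\EX$ layer contributes constant depth, each $\EF$ layer contributes the $O(\log n)$ depth of an $\NL\subseteq\SAC{1}$ reachability subcircuit, and stacking linearly many such layers yields a circuit of polynomial---not logarithmic---depth. Closure under $\leqcd$ is a \emph{one-shot} reduction to a $\LOGCFL$ oracle, not closure under a polynomial-length sequential composition of $\LOGCFL$ computations; the latter is precisely what your labelling scheme requires and is not known to stay inside $\LOGCFL$. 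The paper avoids this entirely by switching to the nondeterministic log-space-plus-pushdown characterisation of $\LOGCFL$: Algorithm~1 traverses the syntax tree of $\varphi$ once, keeping only the current state in $O(\log n)$ workspace and the pending conjuncts on the unbounded stack, so the stack absorbs arbitrary formula depth.

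The rest of your plan is sound and differs from the paper only in flavour. Your $\LOGCFL$-hardness reduction from $\SAC{1}$ circuit value is the paper's Proposition~\ref{prop:LOGCFL-hard}; note that the paper duplicates the $\lor$-nodes and labels the copies $1,2$ so that the formula's conjunction of $\EX$-subformulae can actually address \emph{both} predecessors of an $\land$-gate rather than the same one twice---you will need the same device. For $\P$-hardness you reduce from MCVP whereas the paper reduces from acceptance of alternating log-space machines (Propositions~\ref{prop:ctlmon-mc-P-hardness} and~\ref{prop:ctl-mon-mc-EG-is-P-hard}); both sources are fine and your case split into the six single ``strong'' operators plus the four mixed $\{\AX,\AG\}\times\{\EX,\EF\}$ pairs is exactly the right minimal cover. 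Be aware, though, that the single-operator cases are where the real work sits: to simulate a universal branching step with $\EG$ alone, the paper needs a layered product of $q(n)+1$ copies of the configuration graph together with a trap state, so your phrase ``exploits unbounded path quantification'' considerably understates the gadgetry required.
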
\medskip

We split the proofs of Theorems~\ref{thm1} and \ref{thm2} into 
the upper and lower bounds in the following two subsections.

  \subsection{Upper Bounds}
  \label{subsect:upper bounds}

    In general, model checking for $\CTL$ is known to be solvable in $\P$ \cite{clemsi86}.
    While this upper bound also applies to $\CTLposMC(T)$ (for every
    $T$), we improve it for positive $\CTL$-formulae  
    using only $\EX$ and $\EF$, or only $\AX$ and $\AG$.
    \begin{prop}\label{prop:LOGCFL_upper_bound}
      Let $T$ be a set of $\CTL$-operators such that 
      $T \subseteq \{\EX,\EF\}$ or $T \subseteq \{\AX,\AG\}$.
      Then $\CTLposMC(T)$ is in $\LOGCFL$.
    \end{prop}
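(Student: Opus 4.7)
The plan is to give an alternating Turing machine (ATM) running in $O(\log n)$ space whose accepting computations have polynomial-size proof trees; by the standard characterization $\LOGCFL=\SAC{1}$ in terms of such ATMs, this suffices. By the definition of $\CTLpos$, negations scope only over subformulae containing no \CTL-operators, so an $\AC{0}$-preprocessing (as in the proof of Lemma~\ref{prop:simple-cdreds-mc}) can push each such negation down to the atomic level; we may therefore assume the formula is built from literals using $\land$, $\lor$, and the operators of $T$.

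For $T\subseteq\{\EX,\EF\}$, the ATM maintains on its worktape a current state $v\in W$, a pointer into the formula, and a counter bounded by $|W|$---all of bit-length $O(\log n)$. Starting from $(w,\varphi)$ it case-analyses the current subformula $\psi$: a literal is evaluated directly via $\eta(v)$; $\land$ is treated as universal branching and $\lor$ as existential; for $\EX\psi_1$ the ATM existentially guesses an $R$-successor $v'$ of $v$ and recurses on $(v',\psi_1)$; for $\EF\psi_1$ it existentially either commits and recurses on $(v,\psi_1)$ or, while the counter is positive, decrements the counter, moves to an $R$-successor, and iterates. An induction on $|\psi|$ shows the proof-tree size is $O(|\psi|\cdot|W|)$: the $\lor$, $\EX$, and $\EF$ cases contribute only along a single child (plus up to $|W|$ ``stepping'' nodes for $\EF$), while $\land$ is additive over its two children. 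Hence the problem lies in $\LOGCFL$.

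For $T\subseteq\{\AX,\AG\}$ I would reduce to the complement of the $\{\EX,\EF\}$-case and invoke closure of $\LOGCFL$ under complementation. Define the dual $\bar\varphi$ inductively by swapping $p\leftrightarrow\neg p$, $\land\leftrightarrow\lor$, $\AX\mapsto\EX$, $\AG\mapsto\EF$; a straightforward induction yields $K,w\models\varphi$ iff $K,w\not\models\bar\varphi$, and $\bar\varphi\in\CTLpos(\{\EX,\EF\})$. The map $\varphi\mapsto\bar\varphi$ is computable in $\AC{0}$, so this gives a $\leqcd$-reduction from $\CTLposMC(\{\AX,\AG\})$ to the complement of $\CTLposMC(\{\EX,\EF\})$, which by the first case and closure under complementation lies in $\LOGCFL$.

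The main obstacle is pinning down the polynomial proof-tree bound despite arbitrary nesting of $\EF$'s. A naive $\SAC{1}$-circuit construction---computing each subformula's satisfaction set bottom-up, with a logarithmic-depth transitive-closure gadget per $\EF$---only achieves depth $O(|\varphi|\cdot\log|W|)$, landing in $\NC{2}$ rather than $\SAC{1}$. The ATM formulation avoids this because nested $\EF$'s contribute additively to the proof tree via the on-the-fly graph traversal, rather than multiplicatively via composed transitive-closure circuits.
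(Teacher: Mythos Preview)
Your argument is correct and reaches the same upper bound, but it proceeds via a different characterization of $\LOGCFL$ than the paper does. The paper uses the nondeterministic auxiliary-pushdown characterization: it gives an explicit stack-based procedure (Algorithm~1) that pushes both conjuncts on the stack at each $\land$, nondeterministically resolves $\lor$, $\EX$, and $\EF$ (the latter by guessing a state in the transitive closure $R^\star$), and evaluates maximal propositional subformulae directly; this is then observed to run in nondeterministic polynomial time with logarithmic workspace plus an unbounded stack. Your ATM-with-polynomial-proof-trees argument is the Ruzzo-style dual of this: your universal branching at $\land$ plays exactly the role of the paper's stack, and your counter-driven walk for $\EF$ unpacks what the paper hides in the single line ``choose $w'\in R^\star$''. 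Each approach buys essentially the same thing; yours makes the polynomial proof-tree bound explicit, while the paper's keeps the algorithm closer to a recognizable model-checking procedure. For $T\subseteq\{\AX,\AG\}$ both proofs take the same route via duality and closure of $\LOGCFL$ under complement.

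One small imprecision: the $\AC{0}$ preprocessing you invoke to push negations down to the atoms is not what Lemma~\ref{prop:simple-cdreds-mc} does (that lemma only eliminates \emph{already-atomic} negations by introducing fresh propositions), and computing negation-normal form requires determining the parity of negations above each node, which is not an $\AC{0}$ task. This is harmless for the upper bound, though: as the paper's Algorithm~1 does, have the ATM evaluate each maximal propositional subformula directly (Boolean formula value is in $\NC{1}\subseteq L$, contributing only a polynomial deterministic chain to the proof tree); alternatively, carry a one-bit negation flag and swap the roles of $\land$ and $\lor$ on the fly inside the purely propositional parts---no temporal operator is ever in scope of a negation in $\CTLpos$, so the flag never meets an $\EX$ or $\EF$.
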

    
    \begin{proof}
      First consider the case $T \subseteq \{\EX,\EF\}$.
      We claim that Algorithm~1 
      %on page \pageref{alg:ctl-mc-upper-bound-logcfl} 
      recursively decides whether the Kripke structure $K=(W,R,\eta)$ 
      satisfies the $\CTLpos(T)$-formula $\varphi$ in state $w_0 \in W$. 
      There, $S$ is a stack that stores pairs $(\varphi,w) \in \CTLpos(T) \times W$ and $R^\star$ denotes the transitive closure of $R$. 
      
      \begin{algorithm}
        \caption{Determine whether $K,w_0 \models \varphi$.}
        \label{alg:ctl-mc-upper-bound-logcfl}
        \begin{algorithmic}[1]
          \REQUIRE a Kripke structure $K=(W,R,\eta)$, $w_0 \in W$, $\varphi \in \CTLpos(T)$
          \STATE $\mathrm{push}(S,(\varphi,w_0))$
          \WHILE{$S$ is not empty}
            \STATE $(\varphi,w) \leftarrow \mathrm{pop}(S)$
            \IF{$\varphi$ is a propositional formula} \label{alg:ctl-mc-upper-bound-logcfl-if-start}
              \IF{$\varphi$ evaluates to false in $w$ under $\eta$}
              \RETURN \FALSE
            \ENDIF
            \ELSIF{$\varphi = \alpha \land \beta$}
              \STATE $\mathrm{push}(S,(\beta,w))$
              \STATE $\mathrm{push}(S,(\alpha,w))$
            \ELSIF{$\varphi = \alpha \lor \beta$}
              \STATE nondet.\ $\mathrm{push}(S,(\alpha,w))$ or $\mathrm{push}(S,(\beta,w))$
            \ELSIF{$\varphi = \EX\alpha$}
              \STATE nondet.\ choose $w' \in \{ w' \mid (w,w') \in R\}$ \label{alg:ctl-mc-upper-bound-logcfl-EX}
              \STATE $\mathrm{push}(S,(\alpha,w'))$
            \ELSIF{$\varphi = \EF\alpha$}
              \STATE nondet.\ choose $w' \in \{ w' \mid (w,w') \in R^\star\}$ \label{alg:ctl-mc-upper-bound-logcfl-EF}
              \STATE $\mathrm{push}(S,(\alpha,w'))$ 
            \ENDIF \label{alg:ctl-mc-upper-bound-logcfl-if-end}
          \ENDWHILE
          \RETURN \TRUE
        \end{algorithmic}
      \end{algorithm}

      Algorithm~1 always terminates
      because each subformula of $\varphi$ is pushed to the stack $S$ at most once.
      For correctness, an induction on the structure of formulae shows
      that Algorithm~1 returns
      \textbf{false} if and only if for the most recently popped pair
      $(\psi,w)$ from $S$, we have $K,w \not\models \psi$. Thence, in
      particular, Algorithm~1
      returns \textbf{true} iff $K,w \models \varphi$. 

      Algorithm~1 can be implemented on a
      nondeterministic polynomial-time Turing machine that besides its (unbounded) stack uses only logarithmic memory for the local variables.
      Thus $\CTLposMC(T)$ is in $\LOGCFL$. 
      
      The case $T \subseteq \{\AX,\AG\}$ is analogous and 
      follows from closure of $\LOGCFL$ under complementation.
    \end{proof}
    
    Finally, for the trivial case where no $\CTL$-operators are present, 
    model checking $\CTL(\emptyset)$-formulae is equivalent to the problem 
    of evaluating a propositional formula. 
    This problem is known to be solvable in $\NC1$ \cite{bus87}.

  \subsection{Lower Bounds}\label{subsect:lower bounds}

    The \P-hardness of model checking for \CTL was first stated in \cite{schnoeb02}.
    We improve this lower bound and concentrate on the smallest fragments of monotone \CTL---w.r.t.\ \CTL-operators---with \P-hard model checking.
    
    \begin{prop}\label{prop:ctlmon-mc-P-hardness}
      Let $T$ denote a set of $\CTL$-operators. 
      Then $\CTLmonMC(T)$ is $\P$-hard if $T$ contains an existential and a universal $\CTL$-operator.
    \end{prop}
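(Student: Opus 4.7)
The plan is to reduce a fixed $\P$-complete problem to $\CTLmonMC(T)$. A convenient source is the Horn-satisfiability problem, i.e.\ deciding whether a designated variable lies in the least model of a Horn formula. This is $\P$-complete, and its derivation structure mirrors the alternation between ``\emph{some} clause has conclusion $q$'' and ``\emph{all} of its premises are derivable'', matching the pair of one existential and one universal $\CTL$-operator that $T$ is assumed to contain.

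For the base case $T\supseteq\{\EX,\AX\}$, I build a Kripke structure $K$ whose states are the variables and clauses of the input Horn formula with $n$ variables, augmented with two sinks $\top,\bot$ carrying self-loops (for totality). From each variable-state $s_q$ there is an edge to every clause-state $s_c$ with conclusion $q$; from each clause-state $s_c$ there is an edge to every premise, or to $\top$ if $c$ is a unit clause; from variables with no clauses there is an edge to $\bot$. I label $\top$ with a proposition $t$ and each clause-state with a proposition $\mathsf{cl}$. Define
\[
  \varphi_0 := t,\qquad \varphi_{i+1} := t \;\lor\; \EX(\mathsf{cl} \land \AX\, \varphi_i).
\]
Because $\varphi_i$ occurs only once on the right, $|\varphi_n|=O(n)$. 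A direct induction on $i$ shows $K,s_q\models\varphi_i$ iff $q$ is derivable in at most $i$ Horn-steps, and since derivations bottom out within $n$ steps, $\varphi_n$ decides the Horn instance. The whole construction is manifestly $\leqcd$-computable.

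To cover an arbitrary pair $(E,U)$ with $E\in\{\EX,\EF,\EG,\EU,\ER\}$ existential and $U\in\{\AX,\AF,\AG,\AU,\AR\}$ universal, I substitute $E$ for $\EX$ and $U$ for $\AX$ in the recurrence above and dress the Kripke structure so that $E, U$ still behave one-step on the relevant states. A uniform device is to unfold the Horn graph into a layered DAG of depth $n$, attach a layer-indicator proposition to the states of each level, keep the conclusion-marker $\mathsf{cl}$ on clauses, and introduce ``trap'' states saturated with every auxiliary proposition so that any monotone subformula holds trivially there. The existential operator can then still witness the choice of a direct clause successor (guarded by $\mathsf{cl}$ together with the layer-indicator), and the universal operator can still enforce the derivability of every premise, because deeper reachability only produces trap states where the enclosed formula is automatically satisfied.

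The main obstacle is precisely the absence of negation: the guard-by-implication trick $\ell\to\psi$ one would normally use to confine the range of the transitive operators $\EF,\AG,\EU,\AU,\ER,\AR$ to their intended one-step effect is unavailable in $\CTLmon$. The scope restriction must therefore be engineered structurally---by layering, by proposition-saturated trap states, and by judiciously placed self-loops---which forces a short case analysis over the $5\times 5$ operator pairs, collapsed into a handful of templates grouped by the path formula of the operator (next-style, eventually/globally-style, until/release-style), each with its own tailored gadget.
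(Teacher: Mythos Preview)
Your reduction from Horn derivability is correct for the base case $\{\EX,\AX\}$, and the plan for arbitrary operator pairs is viable. The paper proceeds differently: it reduces from the word problem for alternating logspace Turing machines, turning time-stamped configurations into the Kripke structure and nesting $\psi_i\in\{\AX,\EX\}$ according to the alternation pattern of the machine. A side benefit of that choice is that the resulting $\{\AX,\EX\}$-formula contains \emph{no Boolean connectives at all}, whereas your Horn formula needs $\lor$ and $\land$. For the non-next operators the paper, like you, layers the structure with depth propositions $d_j$, but instead of proposition-saturated trap states it uses the big disjunction $D_i=\bigvee_{j\neq i}d_j$ as a monotone surrogate for ``not at layer~$i$'', writing e.g.\ $\AF(d_i\land x)$ and $\EG(D_i\lor x)$ to force one-step behaviour; $\U$ and $\R$ are then handled via $\A[d_{i-1}\U x]$ and its dual. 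Your trap-state device achieves the same confinement structurally rather than propositionally. Both approaches leave most of the $5\times5$ grid to ``analogous modifications''; the Horn route is arguably more elementary, while the ATM route yields the slightly sharper observation that the connectives are inessential in the base case.
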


    \begin{proof}
      First, assume that $T=\{\AX,\EX\}$.
      We give a generic reduction from the word problem for alternating Turing machines working in logarithmic space,
      which follows the same line as the classical proof idea (see \cite[Theorem 3.8]{schnoeb02}),
      and which we will modify in order to be useful for other combinations of \CTL-operators.
      Let $M$ be an alternating logspace Turing machine, and let $x$ be an input to $M$.
      We may assume w.l.o.g.\ that each transition of $M$ leads from an existential to a universal configuration and vice versa. 
      Further we may assume that each computation of $M$ ends after the same number $p(n)$ of steps, where $p$ is a polynomial and $n$ is the length of $M$'s input.
      Furthermore we may assume that there exists a polynomial $q$ such that $q(n)$ is the number of configurations of $M$ on any input of length $n$.
      
      Let $c_1,\ldots,c_{q(n)}$ be an enumeration of all possible configurations of $M$ on input $x$, starting with the initial configuration $c_1$.
      We construct a Kripke structure $K:=(W,R,\eta)$ by defining the set $W:=\{c_i^j \mid 1 \leq i \leq q(n), 0 \leq j \leq p(n)\}$ and the relation $R \subseteq W \times W$ as
      \begin{align*}
        R  :=   &\; \big\{ (c_i^j,c_k^{j+1}) \;\big|\; M \text{ reaches configuration } c_k \text{ from } c_i \text{ in one step}, 0 \leq j < p(n) \big\} \\
           \cup &\;
           \big\{ (c_i^{j},c_i^{j}) \;\big|\; c_i^j\text{ has no successor}, 1 \leq i \leq q(n),0\leq j< p(n)\big\}\\
           \cup &\; \big\{ (c_i^{p(n)},c_i^{p(n)}) \;\big|\; 1 \leq i \leq q(n)\big\}.
      \end{align*}
      The labelling function $\eta$ is defined for all $c_i^j\in W$ as
      \[
        \eta(c_i^j) := \left\{\begin{array}{cl}
                             \{t\}, & \text{if $c_i$ is an accepting configuration and $j=p(n)$} \\
                             \emptyset, & \text{otherwise}
                          \end{array}
                   \right.
      \]
      where $t$ is the only atom used by this labelling.
      It then holds that
      \[
        M \text{ accepts } x \iff K,c_1^0 \models \psi_1\Big(\psi_2\big(\cdots \psi_{p(n)}(t)\big)\cdots\Big),
      \]
      where $\psi_i(x):=\AX(x)$ if $M$'s configurations \emph{before} the $i$th step are universal, and $\psi_i(x):=\EX(x)$ otherwise.
      Notice that the constructed $\CTL$-formula does not contain any Boolean connective.
      Since $p(n)$ and $q(n)$ are polynomials, the size of $K$ and $\varphi$ is polynomial in the size of $(M,x)$. 
      %To make the transition relation $R$ total, 
      %we verify the existence of a successor for each configuration node and make it reflexive otherwise.
      Moreover, $K$ and $\varphi$ can be constructed from $M$ and $x$ using $\AC{0}$-circuits. 
      Thus, $A \leqcd \CTLmonMC(\{\AX,\EX\})$ for all $A \in \ALOGSPACE=\P$.
      
      For $T=\{\AF,\EG\}$ we modify the above reduction by defining the labelling function $\eta$ and the formula $\psi_i$ as follows: 
      \begin{equation}\label{eq:ctl-mc-P-hardness-1}
%        \psi_i(x) :=
%        \begin{cases}
%          \AF(d_{i} \land x), & \text{if $M$'s configurations in step $i$ are universal, }\\
%          \EG(D_{i} \lor x), & \text{otherwise}, \\
%        \end{cases}
         \begin{array}{@{}r@{\,}c@{\,}l@{}}
         \eta(c_i^j) & := & \left\{ 
           \renewcommand{\arraystretch}{1}
           \begin{array}{@{}l@{~~}l@{}}
             \{d_j, t\}, & \text{if $c_i$ is an accepting configuration and $j=p(n)$} \\
             \{d_j \}, & \text{otherwise}
           \end{array}
           \right. \\[2ex]
         
         \psi_i(x) & := &
         \left\{
         \renewcommand{\arraystretch}{1}
         \begin{array}{@{}l@{~~}l@{}}
           \AF(d_{i} \land x), & \text{if $M$'s configurations \emph{before} step $i$ are universal, }\\[2.5pt]
           \EG(D_{i} \lor x), & \text{otherwise}, \\
         \end{array}
         \right.
         \end{array}
      \end{equation}
      where $d_j$ are atomic propositions encoding the `time stamps' of the respective configurations and $D_i = \bigvee_{i \neq j \in \{0,\ldots,p(n)\}} d_j$. 

      For the combinations of $T$ being one of
      $\{\AF, \EF\}$,
      $\{\AF, \EX\}$,
      $\{\AG, \EG\}$, 
      $\{\AG,$ $\EX\}$, 
      $\{\AX, \EF\}$, 
      and $\{\AX,\EG\}$, 
      the $\P$-hardness of 
      $\CTLmonMC(T)$ is obtained using analogous modifications to $\eta$ and the $\psi_i$'s.
      
      For the remaining combinations involving the until or the release operator, observe that w.r.t.\
      the Kripke structure $K$ as defined in \eqref{eq:ctl-mc-P-hardness-1}, 
      $\AF(d_i \land x)$ and $\EG(D_i \lor x)$ are equivalent to $\A[d_{i-1} \U x]$ and $\E[d_{i-1} \U x]$,
      and that $\R$ and $\U$ are duals.
    \end{proof}

    In the presence of arbitrary negation, universal operators are definable by
    existential operators and vice versa. Hence, from
    Proposition~\ref{prop:ctlmon-mc-P-hardness} we 
    obtain the following corollary.    
    \begin{cor}  \label{cor:ctl-mc-P-hardness}
      The model checking problem $\CTLMC(T)$ is $\P$-hard for each
      nonempty set $T$ of $\CTL$-operators. 
    \end{cor}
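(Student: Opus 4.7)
My plan is to deduce the corollary from Proposition~\ref{prop:ctlmon-mc-P-hardness} by exhibiting, for any nonempty $T$, a $\leqcd$-reduction from $\CTLmonMC(\{\mathrm{AO},\mathrm{EO}\})$ to $\CTLMC(T)$, where $\mathrm{AO}$ is some universal $\CTL$-operator and $\mathrm{EO}$ its existential De Morgan dual (or vice versa), and at least one of $\mathrm{AO}, \mathrm{EO}$ lies in $T$.

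First, I would pick an arbitrary $\mathrm{OP} \in T$ and identify its dual $\mathrm{OP}^*$ using the standard equivalences: $\AX\varphi \equiv \neg\EX\neg\varphi$, $\AF\varphi \equiv \neg\EG\neg\varphi$, $\AG\varphi \equiv \neg\EF\neg\varphi$, $\A[\varphi\U\psi] \equiv \neg\E[\neg\varphi\R\neg\psi]$, and $\A[\varphi\R\psi] \equiv \neg\E[\neg\varphi\U\neg\psi]$ (and symmetrically). In each case, the pair $\{\mathrm{OP},\mathrm{OP}^*\}$ consists of one universal and one existential $\CTL$-operator, so Proposition~\ref{prop:ctlmon-mc-P-hardness} yields $\P$-hardness of $\CTLmonMC(\{\mathrm{OP},\mathrm{OP}^*\})$.

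Next, I would describe the reduction: given an instance $(K,w,\varphi)$ of $\CTLmonMC(\{\mathrm{OP},\mathrm{OP}^*\})$, output $(K,w,\varphi')$ where $\varphi'$ is obtained from $\varphi$ by syntactically replacing each occurrence of whichever of $\mathrm{OP}, \mathrm{OP}^*$ does not lie in $T$ by the right-hand side of the corresponding De Morgan equivalence involving the operator that does lie in $T$. The resulting formula $\varphi'$ uses only the $\CTL$-operator from $T$ together with $\neg, \land, \lor$, hence $\varphi' \in \CTL(T)$. Semantic equivalence is immediate from the displayed identities, so $K,w \models \varphi \iff K,w \models \varphi'$. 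The rewriting is purely local: each operator occurrence is substituted by a fixed-size template, which is trivially realisable by a logtime-uniform $\AC{0}$-circuit, so the reduction is indeed a $\leqcd$-reduction.

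There is essentially no obstacle here; the only thing to check is that the dualities listed above really do allow each CTL-operator to be expressed in terms of its dual using nothing but negation, which holds by the equivalences already stated in Section~\ref{sect:prelim}. Combining the reduction with Proposition~\ref{prop:ctlmon-mc-P-hardness} then gives $\P$-hardness of $\CTLMC(T)$ for every nonempty $T$.
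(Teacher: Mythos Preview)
Your proposal is correct and follows essentially the same approach as the paper: the paper's proof is simply the one-line remark that in the presence of arbitrary negation, universal operators are definable by existential ones and vice versa, so Proposition~\ref{prop:ctlmon-mc-P-hardness} applies. You have just spelled out explicitly the dualities and the $\leqcd$-reduction that the paper leaves implicit.
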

    
    Returning to monotone \CTL, in
    most cases even one operator suffices to make model checking
    $\P$-hard:    
    \begin{prop} \label{prop:ctl-mon-mc-EG-is-P-hard}
      Let $T$ denote a set of $\CTL$-operators. 
      Then $\CTLmonMC(T)$ is $\P$-hard if $T$ contains at least one of the operators $\EG$, $\EU$, $\ER$, $\AF$, $\AU$, or $\AR$.
    \end{prop}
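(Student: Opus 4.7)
The plan is, for each of the six operators $O \in \{\EG, \EU, \ER, \AF, \AU, \AR\}$, to give a \leqcd-reduction from a P-complete problem (concretely, the word problem for alternating logspace Turing machines, as used in Proposition~\ref{prop:ctlmon-mc-P-hardness}) to $\CTLmonMC(\{O\})$. Building on that earlier reduction, where the temporal formula alternated between a universal and an existential operator to track the alternation of configurations, the point now is that only one operator is available, so the alternation of the ``wrong polarity'' must be encoded into the Kripke structure itself.

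For the existential operators $\EG$, $\EU$, $\ER$: the existential path quantifier naturally handles existential configurations, exactly as before. For each universal configuration $c_i$ with successors $d_1,\dots,d_m$, I would replace the direct branching by a \emph{sequential verification gadget}: auxiliary states threaded through $d_1,\dots,d_m$, together with fresh phase-marker propositions that distinguish ``currently checking $d_k$'' from ``done''. The gadget is arranged so that any single infinite path witnessing $\EG\varphi$ (resp.\ an $\EU$ or $\ER$ path) is forced to traverse all of $d_1,\dots,d_m$, verifying the acceptance condition at each, before it can continue to sustain $\varphi$. Conjunction inside $\varphi$ then plays the role of the missing universal quantification across successors of universal configurations. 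For $\EU$ and $\ER$ the until/release lets one express ``phase $p$ holds until phase $q$ appears'', yielding cleaner transitions between the verification stages than $\EG$ alone would allow.

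For the universal operators $\AF$, $\AU$, $\AR$: the construction is dualised. Universal configurations are handled directly by the $\A$-quantifier; for each existential configuration with successors $d_1,\dots,d_m$, a branching gadget labelled by fresh propositions $b_1,\dots,b_m$ is inserted, and the formula asserts that on every path at least one branch is chosen, effectively selecting a witness. As in the previous proof, time-stamp propositions $d_j$ and their disjunctions $D_j$ can be reused to keep the formula monotone and to separate the $p(n)$ steps of the simulated computation.

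The main obstacle is the gadget design: the single available operator must simulate both kinds of alternation under two severe constraints---no negation (since the formulas are monotone) and no other temporal operator as a helper. The uniform template above must be instantiated six times, with small semantic adjustments reflecting the differences between $\G$/$\F$ and $\U$/$\R$, and with extra care for the release-based cases, where the natural dual $\neg[\neg\chi\U\neg\pi]$ is unavailable and the ``release'' behaviour has to be mimicked structurally rather than symbolically.
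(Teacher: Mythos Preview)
Your core idea for $\EG$---thread the successors of each universal configuration into a chain so that any single $\EG$-witnessing path is forced to visit all of them, with conjunction in the formula recovering the missing universal branching---is correct and is exactly what the paper does (via its $q(n){+}1$ layered copies of $K$ together with a trap state $z$).

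Where you diverge is in economy. The paper carries out the gadget construction \emph{only} for $\EG$ and disposes of the other five operators in one line, using closure of $\P$ under complement together with the equivalences $\F\chi \equiv \neg\G\neg\chi \equiv [\true\U\chi]$ and $[\chi\U\pi] \equiv \neg[\neg\chi\R\neg\pi]$. Concretely: $\ER$ expresses $\EG$ via $\EG\chi \equiv \E[\false\R\chi]$; dualising the constructed $\EG$-formula and pushing negations to the atoms yields a $\CTLan(\{\AF\})$-formula, which Lemma~\ref{lem:atomic-negation=no-negation} turns into a monotone one; $\AU$ expresses $\AF$ via $\A[\true\U\cdot]$; and one more round of dualisation covers $\EU$ and $\AR$. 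No second gadget is ever designed.

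More importantly, your separate plan for the universal operators is not merely redundant but, as written, does not work. A \emph{branching} gadget at an existential configuration combined with an $\A$-quantified formula cannot ``select a witness'': with actual branching, a formula such as $\AF\bigl(\bigvee_k(b_k \land \chi)\bigr)$ forces $\chi$ along \emph{every} branch (each path sees only its own $b_k$), which is universal, not existential. What would work is again a \emph{sequential} chain through $d_1,\dots,d_m$---the dual of your own $\EG$ gadget---so that there is essentially one path and $\AF$ with a disjunction over the $b_k$ genuinely picks one successor. That is precisely what the paper's duality argument hands you for free, which is why attempting to build the $\AF$/$\AU$/$\AR$ cases from scratch is both unnecessary and the place where your sketch slips.
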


%     The proof of this proposition proceeds similarly as the proof of
%     Proposition~\ref{prop:ctlmon-mc-P-hardness}, but is technically
%     more involved. In essence, it shows that both $\AX$ and $\EX$ can
%     be simulated using only $\EG$.

    \begin{proof}
      We modify the proof of Proposition~\ref{prop:ctlmon-mc-P-hardness} to work with $\EG$ only. 
      The remaining fragments follow from the closure of $\P$ under complementation and 
      $\F \chi \equiv \neg \G \neg \chi \equiv [\true \U  \chi]$, $[\chi \U \pi] \equiv \neg [\neg \chi \R \neg\pi]$.
      
      Let the machine $M$, the word $x$, the polynomials $p, q$, and $K$ be as above.
      Further assume w.l.o.g.\ that $M$ branches only binary in each step. 
      Denote by $W_\exists$ (resp.\ $W_\forall$) the set of states corresponding to existential (resp.\ universal) configurations. The purpose of the introduced layers below is to ensure the uniqueness of the successors of universal configurations which is essential in the construction of $\psi_i$ later.
      We construct a Kripke structure $K':=(W',R,\eta)$ consisting of $q(n)+1$ layers and a `trap' as follows:
      let $W':= W \times \{1,\ldots,q(n) + 1\} \cup \{z\}$.
      The transition relation $R \subseteq W' \times W'$ is defined as
      \newlength{\temp}
      \settowidth{\temp}{$\big((c_k^{j+1},1),(c_k^{j+1},2)\big),$}
      \[
      \begin{array}{@{}r@{\,}c@{\,}l@{}}
        R&:=&\left\{
               \big((c_i^j,\ell),(c_k^{j+1},\ell)\big)
             \;\middle|\;
             \begin{array}{@{}l@{}}
               c_i^j \in W_\exists, M \text{ reaches } c_k \text{ from } c_i \text{ in one step}, \\
               1 \leq \ell \leq q(n)+1, 0 \leq j < p(n)
             \end{array}
        \right\} \\
         &\cup&\left\{
          \begin{array}{@{}l@{}}
            \big((c_i^j,\ell),(c_k^{j+1},i)\big), \\ 
            \big((c_k^{j+1},i),(c_{k'}^{j+1},q(n)+1)\big), \\ 
            \big((c_{k'}^{j+1},q(n)+1),z\big)
          \end{array}
          \;\middle|\;
          \begin{array}{@{}l@{}}
            c_i^j \in W_\forall, M \text{ reaches } c_k \text{ and } c_{k'} \text{ from } c_i \text{ in} \\
            \text{one step}, c_k \leq c_{k'}, 0 \leq j < p(n)
          \end{array}
        \right\} \\
%        &\cup& \big\{(w,w) \;\big|\; w \in W' \big\}
        &\cup& \big\{\big((c_i^{p(n)},\ell),(c_i^{p(n)},\ell)\big) \mid 1\leq i\leq q(n), 1\leq\ell\leq q(n)+1 \big\} \\
        &\cup& \big\{(z,z)\big\}.
      \end{array} 
      \]
      That is, the arcs leaving an existential configurations $c_i$ lead to the successor configurations of $c_i$ inside each layer;
      while any universal configuration $c_i$ has exactly one outgoing arc pointing to its (lexicographically) first successor 
      configuration in the layer $i$, from where another arc leads to the second successor of $c_i$ in layer $q(n)+1$, 
      which in turn has an outgoing arc to the state $z$ (see Figure~\ref{fig:K'}).
      \begin{figure}
        \includegraphics[width=0.6\textwidth]{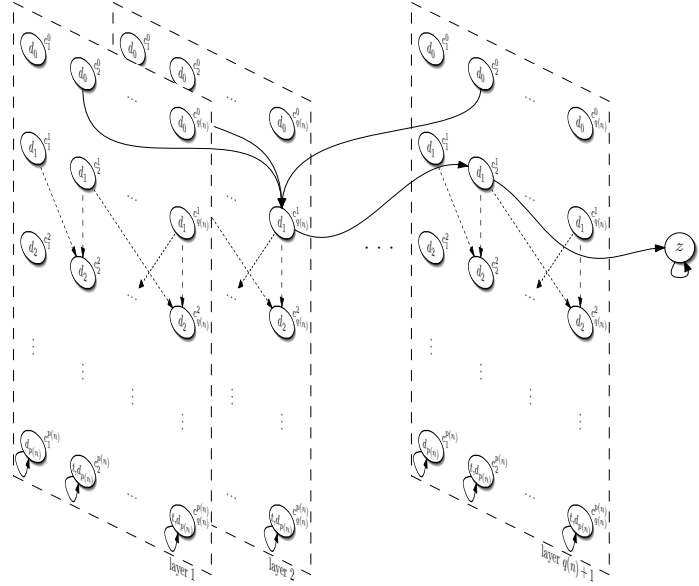}
        \caption{\label{fig:K'}%
          The Kripke structure $K'$; dashed (resp.\ solid) arrows correspond to transitions leaving existential (resp.\ universal) configurations.%
        }
      \end{figure}
      The labelling function $\eta$ is defined as 
        $\eta(z):= \{z\}$,
        $\eta((c_i^j,\ell)):= \{\ell,d_j,t\}$ if $c_i$ is an accepting configuration, and otherwise 
        $\eta((c_i^j,\ell)):= \{\ell,d_j\}$ for ($1 \leq \ell \leq q(n)+1$).
      Define 
      \[
        \psi_i(x) := 
        \begin{cases}
          \EG(d_{i-1} \lor (d_i\land x)  \lor z), \text{if $M$'s configurations before step $i$ are universal,} \\
          \EG(D_{i} \lor x), \text{if $M$'s configurations before step $i$ are existential,}\\
        \end{cases}
      \]
      and $D_i = \bigvee_{i \neq j \in \{0,\ldots,p(n)\}} d_j$. 
      The correctness of the equivalence
      $K,w \models \AF(d_i \land x)$ iff 
      $K',(w,\ell) \models \EG\big(d_{i-1} \lor (d_i\land x)  \lor z)\big)$,
      for all $w \in W_\forall$, $1 \leq \ell \leq q(n)+1$ and $1 \leq i \leq p(n)$ can be verified through the following observations. $\Rightarrow$: if $d_i$ and $x$ hold in all successors of $w$ in $K$, then there exists a path from $(w,\ell)$ to both of the series-connected successors reaching the trap and looping there. This is the only possibility for the path as neither $d_{i-1}$ nor $d_i$ hold below that level. As in each successor configuration the subformula $d_i\land x$ must be satisfied the composition of the $\psi_i$s ensures that in each such state there must start an $\EG$-path for each universal successor. $\Leftarrow$: the only path which satisfies at least one of the three disjuncts ranges through both series-connected successor configurations and ends in the trap. For each of the two successor states $d_i$ and $x$ hold. Thus $\AF(d_i\land x)$ is true in the state $w$ in the structure $K$.
      
      From this, it easily follows that for 
      \[
        M \text{ accepts } x  \iff  K',(c_1^0,1)   \models   \psi_1\Big(\psi_2\big( \cdots \psi_{p(n)}(t)\big)\cdots \Big).
      \]
      % Explain why the above redution is constant-depth computable?
      As we essentially only duplicated the set of states in $K$ and 
      $R$ can be constructed from all triples of states in $W'$, 
      $K'$ remains $\AC{0}$ constructible.
      Concluding $A \leqcd \CTLmonMC(\{\EG\})$ for all $A \in \P$. 
    \end{proof}

    By Lemma~\ref{prop:simple-cdreds-mc},
    $\CTLmonMC(T)\leqcd\CTLposMC(T)$ and hence
    the above results directly translate to model checking for $\CTLpos$:
    for any set $T$ of temporal operators, 
    $\CTLposMC(T)$ is $\P$-hard if $T \nsubseteq \{\EX,\EF\}$ or if $T \nsubseteq \{\AX,\AG\}$.
    These results cannot be improved w.r.t.\ $T$, as for $T \subseteq \{\EX,\EF\}$ and $T \subseteq \{\AX,\AG\}$
    we obtain a \LOGCFL upper bound for model checking from Proposition~\ref{prop:LOGCFL_upper_bound}.
    In the following proposition we prove the matching \LOGCFL lower bound.

    \begin{prop}\label{prop:LOGCFL-hard}
      For every nonempty set $T$ of $\CTL$-operators,
      the model checking problem $\CTLmonMC(T)$ is $\LOGCFL$-hard.
    \end{prop}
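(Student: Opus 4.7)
My plan is to case-split on $T$. If $T$ contains both a universal and an existential $\CTL$-operator, or any one of $\EG,\EU,\ER,\AF,\AU,\AR$, then Propositions~\ref{prop:ctlmon-mc-P-hardness} and~\ref{prop:ctl-mon-mc-EG-is-P-hard} already give $\P$-hardness, whence $\LOGCFL$-hardness follows from $\LOGCFL\subseteq\P$. The remaining cases are $\emptyset\ne T\subseteq\{\EX,\EF\}$ and $\emptyset\ne T\subseteq\{\AX,\AG\}$, and the two are symmetric: reversing the relation $R$ and swapping $\EX\leftrightarrow\AX$, $\EF\leftrightarrow\AG$ is an $\leqcd$-reduction between the two families, and $\LOGCFL$ is closed under complementation, so it suffices to treat the existential side.

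For the existential side I reduce from the $\LOGCFL$-complete $\SAC{1}$-circuit value problem, using the identification $\LOGCFL=\SAC{1}$ from the preliminaries. Fix a logtime-uniform $\SAC{1}$-family for a $\LOGCFL$-hard language $A$. Its $n$-th circuit $C$ has (WLOG, after inserting unary dummies) strictly alternating levels of unbounded fan-in $\lor$- and fan-in-$2$ $\land$-gates, depth $h=O(\log n)$, polynomial size, and negations pushed to the input literals. On input $x$ I build a Kripke structure $K$ whose states are the gates of $C$ together with, for every $\land$-gate $A$ having children $g_L,g_R$, two auxiliary states $A_L,A_R$. The edges are $g\to g'$ for every $\lor$-gate $g$ and every child $g'$ of it, together with $A\to A_L$, $A\to A_R$, $A_L\to g_L$, $A_R\to g_R$ for every $\land$-gate $A$; each input gate carries a self-loop to make $R$ total. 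An input gate is labelled $t$ iff the literal at that input evaluates to $1$ under $x$; every $A_L$ is labelled with a fresh proposition $\ell$ and every $A_R$ with $r$; finally, if $\EX\notin T$, every state at level $i$ is labelled with a fresh proposition $d_i$ (there are $O(\log n)$ levels).

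I then define $\varphi_{h'+1}:=t$ (where $h'$ is the number of levels in $K$) and recursively, for $d$ decreasing,
\[
\varphi_d \;:=\; \begin{cases}
\EX\,\varphi_{d+1} & \text{if level $d$ is $\lor$ or an $A_L/A_R$-layer,}\\
\EX(\ell\land\varphi_{d+1})\,\land\,\EX(r\land\varphi_{d+1}) & \text{if level $d$ is $\land$.}
\end{cases}
\]
For the $T=\{\EF\}$ case I replace each $\EX\psi$ above by $\EF(d_{d+1}\land\psi)$; the leveled structure of $K$ guarantees that the level-$(d{+}1)$ descendants of a level-$d$ state are exactly its direct successors, so the semantics is preserved. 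A routine induction on $d$ shows $K,g\models\varphi_d$ iff $g$ evaluates to $1$ in $C(x)$, for every original gate $g$ at level $d$; hence $(K,\,\text{output},\,\varphi_0)$ is a yes-instance of $\CTLmonMC(T)$ iff $x\in A$. All the data can be produced by a logtime-uniform $\AC{0}$-circuit consulting the uniformity machine of $C$, so the reduction is $\leqcd$.

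The principal obstacle is keeping $|\varphi_0|$ polynomial: every $\land$-level doubles the enclosed subformula, giving $|\varphi_0|\le 2^{O(h)}$, which is polynomial precisely because $h=O(\log n)$---this is exactly why the $\SAC{1}$-characterization of $\LOGCFL$ is needed rather than the $\P$-complete monotone CVP. A secondary subtlety is distinguishing ``left'' from ``right'' children in the presence of DAG sharing, which is resolved cleanly by the auxiliary states $A_L, A_R$; and the $T=\{\EF\}$-only case is resolved by the explicit level labels $d_i$.
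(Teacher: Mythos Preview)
Your overall approach matches the paper's: case-split, invoke Propositions~\ref{prop:ctlmon-mc-P-hardness} and~\ref{prop:ctl-mon-mc-EG-is-P-hard} for the $\P$-hard fragments, and for $T\subseteq\{\EX,\EF\}$ reduce from the $\SAC{1}$ circuit value problem, using level labels $d_i$ to simulate $\EX$ by $\EF$. Your gadget for distinguishing the two children of an $\land$-gate (fresh auxiliary states $A_L,A_R$ labelled $\ell,r$) is a harmless variant of the paper's device (two copies $V^1,V^2$ of the $\lor$- and input-gates, labelled $1,2$); the induction and the $2^{O(h)}=\mathrm{poly}(n)$ size bound go through identically.

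There is, however, one genuine error in your passage to the universal side. Reversing the transition relation $R$ does \emph{not} interchange $\EX$ with $\AX$ (or $\EF$ with $\AG$): in the reversed structure, $\EX$ talks about $R$-predecessors, which is unrelated to ``all $R$-successors''. The correct duality---the one the paper uses---is via complementation of the \emph{formula}, not the relation: if $f(C,x)=(K,v_0,\varphi)$ is your reduction for $\{\EX\}$ (resp.\ $\{\EF\}$), then $C(x)=0$ iff $K,v_0\models\neg\varphi$, and pushing the single outer negation inward turns $\neg\varphi$ into a $\CTLan(\{\AX\})$- (resp.\ $\CTLan(\{\AG\})$-) formula. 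Closure of $\LOGCFL$ under complement then yields $\LOGCFL$-hardness of $\CTLanMC(\{\AX\})$ and $\CTLanMC(\{\AG\})$, and Lemma~\ref{lem:atomic-negation=no-negation} transfers this to $\CTLmonMC$. So drop the $R$-reversal, keep your remark on closure under complement, and add the appeal to Lemma~\ref{lem:atomic-negation=no-negation}; the rest of your argument stands.
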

  
    \begin{proof}
      As explained in Section~\ref{subsect:complexity}, 
      $\LOGCFL$ can be characterized as the set of languages recognizable by logtime-uniform $\SAC1$ circuits, 
      \emph{i.e.}, circuits of logarithmic depth and polynomial size 
      consisting of $\lor$-gates with unbounded fan-in and $\land$-gates with fan-in $2$. 
      For every single $\CTL$-operator $O$,
      we will show that $\CTLmonMC(T)$ is $\LOGCFL$-hard for all $T\supseteq\{O\}$
      by giving a generic $\leqcd$-reduction $f$ from the word problem for $\SAC1$ circuits to $\CTLmonMC(T)$. 
      
      First, consider $\EX \in T$.
      Let $C$ be a logtime-uniform $\SAC1$ circuit of depth $\ell$ with $n$ inputs and let $x=x_1\dots x_n \in \{0,1\}^n$. 
      Assume w.l.o.g.\ 
        that $C$ is connected, 
        layered into alternating layers of $\land$-gates and $\lor$-gates, and
        that the output gate of $C$ is an $\lor$-gate.
      We number the layers bottom-up, that is, the layer containing (only) the output gate has level $0$, whereas the input-gates and negations of the input-gates are situated in layer $\ell$. Denote the graph of $C$ by $G=(V,E)$, where $V:=V_\mathrm{in} \uplus V_\land \uplus V_\lor$ is partitioned into the sets corresponding to the (possibly negated) input-gates, the $\land$-gates, and the $\lor$-gates, respectively.
      $G$ is acyclic and directed with paths leading from the input to the output gates.
      From $(V,E)$ we construct a Kripke structure that allows to distinguish the two predecessors of an $\land$-gate from each other. 
      This will be required to model proof trees using $\CTLmon(\{\EX\})$-formulae. 
      
      For $i \in \{1,2\}$, let $V_{\mathrm{in}}^i:=\{v^i \mid v \in V_{\mathrm{in}}\}$, 
          $V_{\lor}^i:=\{v^i \mid v \in V_{\lor}\}$ and define $V_{\mathrm{in},\lor}^i:= V_\mathrm{in}^i \cup V_\lor^i$.
      Further define 
      \begin{align*}
        E':= \;
        & \big\{(v,u^i) \in V_\land \times V_{\mathrm{in},\lor}^i \mid (u,v) \in E \text{ and $u$ is the $i$th predecessor of $v$} \big\} \\
        \cup \;
        & \big\{(v,v) \mid v \in V_{\mathrm{in}}^1 \cup V_{\mathrm{in}}^2 \big\} \cup \bigcup_{i \in \{1,2\}} \big\{(v^i,u) \in V_{\mathrm{in},\lor}^i \times  V_\land \mid (u,v) \in E \big\},
      \end{align*}
      where the ordering of the predecessors is implicitly given in the encoding of $C$.
      We now define a Kripke structure $K:=(V',E',\eta)$ with 
       states $V':=V_{\mathrm{in},\lor}^1 \cup V_{\mathrm{in},\lor}^2 \cup V_\land$, transition relation $E'$, and
       labelling function $\eta:V'\rightarrow\powerset{\{1,2,t\}}$, 
      \[
        \eta(v) :=
        \begin{cases}
%           \{i,t\},   & \text{if }v=v_{\mathrm{in}_j} \in V_\mathrm{in}^i \text{ and } x_j=1, \\
%           \{i,t\},   & \text{if }v=\overline{v}_{\mathrm{in}_j} \in V_\mathrm{in}^i \text{ and } x_j=0, \\
%           \{i\},   & \text{if }v=v_{\mathrm{in}_j} \in V_\mathrm{in}^i \text{ and } x_j=0, \\
%           \{i\},   & \text{if } v=\overline{v}_{\mathrm{in}_j} \in V_\mathrm{in}^i \text{ and } x_j=1, \\
%           \{i\},   & \text{if }v \in V_\lor^i, \\
          \{i,t\},   & \text{if } (v=v_{\mathrm{in}_j} \in V_\mathrm{in}^i \text{ and } x_j=1) \text{ or } (v=\overline{v}_{\mathrm{in}_j} \in V_\mathrm{in}^i \text{ and } x_j=0), \\
          \{i\},   & \text{if } (v=v_{\mathrm{in}_j} \in V_\mathrm{in}^i \text{ and } x_j=0) \text{ or } (v=\overline{v}_{\mathrm{in}_j} \in V_\mathrm{in}^i \text{ and } x_j=1) \text{ or } v \in V_\lor^i, \\
          \emptyset, & \text{otherwise},
        \end{cases}
      \]
      where $i=1,2$, $j=1,\ldots,n$ and $v_{\mathrm{in}_1},\ldots, v_{\mathrm{in}_n}$, $\overline{v}_{\mathrm{in}_1},\ldots,\overline{v}_{\mathrm{in}_n}$
      enumerate the input gates and their negations.
      The formula $\varphi$ that is to be evaluated on $K$
      will consist of atomic propositions $1$, $2$ and $t$,
      Boolean connectives $\land$ and $\lor$, and the $\CTL$-operator $\EX$.
      To construct $\varphi$ we recursively define formulae 
      $(\varphi_i)_{0 \leq i \leq \ell}$ by
      \[
      \varphi_i :=
      \begin{cases}
        t, & \text{if $i=\ell$}, \\
        \EX \varphi_{i+1}, & \text{if $i$ is even ($\lor$-layers)}, \\
        \bigwedge_{i=1,2}\EX(i \land \varphi_{i+1}), & \text{if $i$ is odd ($\land$-layers)}.
      \end{cases}
      \]
      We define the reduction function $f$ as the mapping $(C,x) \mapsto (K,v_0,\varphi)$, 
      where $v_0$ is the node corresponding to the output gate of $C$ and $\varphi:=\varphi_0$.
      We stress that the size of $\varphi$ is polynomial, for the depth of $C$ is logarithmic only.
      Clearly, each minimal accepting subtree (cf. \cite{ruz80} or \cite[Definition 4.15]{vol99}) of $C$ on input $x$ translates into a sub-structure $K'$ of $K$ 
      such that $K',v_0 \models \varphi$, where
      \begin{enumerate}
        \item $K'$ includes $v_0$,
        \item $K'$ includes one successor for every node corresponding to an $\lor$-gate, and
        \item $K'$ includes the two successors of every node corresponding to an $\land$-gate.
      \end{enumerate}
      As $C(x)=1$ iff there exists a minimal accepting subtree of $C$ on $x$, the $\LOGCFL$-hardness of $\CTLmonMC(T)$ for $\EX \in T$ follows.
      
      Second, consider $\EF \in T$.
      We have to extend our Kripke structure to contain information about the depth of the corresponding gate. 
      We may assume w.l.o.g.\ that $C$ is encoded such that
      each gate contains an additional counter holding the distance to
      the output gate (which is equal to the number of the layer it is
      contained in, cf.~\cite{vol99}). 
      We extend $\eta$ to encode this distance $i$, $1\leq i \leq \ell$, into the ``depth-propositions'' $d_i$ as in the proof of Proposition~\ref{prop:ctlmon-mc-P-hardness}.
      Denote this modified Kripke structure by $K'$.
      Further, we define $(\varphi_i')_{0 \leq i \leq \ell}$ as
      \[
      \varphi_i' :=
      \begin{cases}
        t, & \text{if $i=\ell$}, \\
        \EF(d_{i+1} \land \varphi_{i+1}'), & \text{if $i$ is even}, \\
        \bigwedge_{i=1,2} \EF(d_{i+1} \land i \land \varphi_{i+1}'), & \text{if $i$ is odd}.
      \end{cases}
      \]
      Redefining the reduction $f$ as $(C,x) \mapsto (K',v_0,\varphi_0')$ hence yields the \LOGCFL-hardness of $\CTLmonMC(T)$ for $\EF \in T$.
      
      Third, consider $\AX \in T$.
      Consider the reduction in case 1 for $\CTLmon(\{\EX\})$-formulae,
      and let $f(C,x)=(K,v_0,\varphi)$ be the value computed by the reduction function.
      It holds that $C(x)=1$ iff $K,v_0\models\varphi$,
      and equivalently $C(x)=0$ iff $K,v_0\models\neg\varphi$.
      Let $\varphi'$ be the formula obtained from $\neg\varphi$ by multiplying
      the negation into the formula.
      Then $\varphi'$ is a $\CTLan(\{\AX\})$-formula.
      Since \LOGCFL is closed under complement,
      it follows that $\CTLanMC(\{\AX\})$ is \LOGCFL-hard.
      Using Lemma~\ref{lem:atomic-negation=no-negation},
      we obtain that $\CTLmonMC(\{\AX\})$ is \LOGCFL-hard, too.
      An analogous argument works for the case $\AG\in T$.
      The remaining fragments are even $\P$-complete by Proposition~\ref{prop:ctl-mon-mc-EG-is-P-hard}.
    \end{proof}

    Using Lemma~\ref{prop:simple-cdreds-mc} we obtain 
    $\LOGCFL$-hardness of $\CTLposMC(T)$ for all nonempty sets $T$ of $\CTL$-operators.

%     \begin{cor}\label{coro:LOGCFL-hard}
%       The model checking problem $\CTLposMC(T)$ is $\LOGCFL$-hard for each
%       nonempty set $T$ of $\CTL$-operators. 
%     \end{cor}

    In the absence of $\CTL$-operators, the lower bound for the model
    checking problem again follows from the lower bound for evaluating monotone propositional formulae.
    This problem is known to be hard for $\NC{1}$ \cite{bus87,sc10}.

%     \begin{prop}\label{prop:NC1_lower_bound}
%       $\CTLmonMC(\emptyset)$ is $\NC{1}$-hard.
%     \end{prop}
%     
%     \begin{proof}
%       This follows from $\CTLmonMC(\emptyset)$ being equivalent to 
%       the problem of evaluating a propositional formula over the connectives $\land$ and $\lor$. 
%       This problem is known to be hard for $\NC{1}$ \cite{bus87,sch05}.
%     \end{proof}
% 
%     Again, Lemma~\ref{prop:simple-cdreds-mc}
%     yields the same hardness result for $\CTLposMC(\emptyset)$.
% 
%     \begin{cor}\label{coro:NC1_lower_bound}
%       $\CTLposMC(\emptyset)$ is $\NC{1}$-hard.
%     \end{cor}

  \subsection{The Power of Negation}\label{subsect:negation}

  We will now show that model checking for the fragments $\CTLan$ and $\CTLpos$
  is computationally equivalent to model checking for $\CTLmon$, 
  for any set $T$ of \CTL-operators.
  Since we consider $\leqcd$-reductions, this is not immediate.

  From Lemma~\ref{prop:simple-cdreds-mc}
  it follows that the hardness results for $\CTLmonMC(T)$
  also hold for $\CTLanMC(T)$ and $\CTLposMC(T)$.
  Moreover, the algorithms for $\CTLposMC(T)$ 
  also work for $\CTLmonMC(T)$ and $\CTLanMC(T)$ without using more computation resources.
  Both observations together yield
  the same completeness results for all \CTL-fragments with restricted negations.

  \begin{thm}\label{thm2_ext}
  Let $T$ be any set of \CTL-operators. 
  Then $\CTLmonMC(T)$, $\CTLanMC(T)$, and $\CTLposMC(T)$ are 

  \begin{iteMize}{$-$}
    \item
      \NC1-complete if $T$ is empty, 
    \item
      \LOGCFL-complete if $\emptyset\subsetneq T\subseteq\{\EX,\EF\}$ or 
      $\emptyset\subsetneq T \subseteq \{\AX,\AG\}$, 
    \item
      \P-complete otherwise.
  \end{iteMize}

  Moreover,  the problems $\CTLmonMC(T)$, $\CTLanMC(T)$, and $\CTLposMC(T)$
  are equivalent w.r.t.\ $\leqcd$-reductions.
  \end{thm}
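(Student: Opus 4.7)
The plan is to deduce Theorem~\ref{thm2_ext} from results already at hand using Lemma~\ref{prop:simple-cdreds-mc} to transfer bounds across the three fragments, and then to obtain the $\leqcd$-equivalence by a standard completeness-to-equivalence argument. The chain $\CTLmonMC(T)\leqcd\CTLanMC(T)\leqcd\CTLposMC(T)$ supplied by Lemma~\ref{prop:simple-cdreds-mc} is the main tool for moving information between the three problems.

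For the upper bounds, I would invoke Proposition~\ref{prop:LOGCFL_upper_bound} to place $\CTLposMC(T)$ in $\LOGCFL$ whenever $\emptyset\subsetneq T\subseteq\{\EX,\EF\}$ or $\emptyset\subsetneq T\subseteq\{\AX,\AG\}$. For any other nonempty $T$, membership in $\P$ follows from $\CTLpos(T)\subseteq\CTL(T)$ together with standard $\CTL$ model checking; for $T=\emptyset$ the problem reduces to propositional formula evaluation, which is in $\NC1$ by Buss~\cite{bus87}. Lemma~\ref{prop:simple-cdreds-mc} then transports each of these upper bounds down to $\CTLanMC(T)$ and $\CTLmonMC(T)$.

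For the matching lower bounds I would anchor at the $\CTLmon$ end, since the hardness constructions in this section already produce monotone formulae. Propositions~\ref{prop:ctlmon-mc-P-hardness} and~\ref{prop:ctl-mon-mc-EG-is-P-hard} give $\P$-hardness of $\CTLmonMC(T)$ in exactly the cases listed, Proposition~\ref{prop:LOGCFL-hard} supplies $\LOGCFL$-hardness for every nonempty $T$, and for $T=\emptyset$ the $\NC1$-hardness of the monotone Boolean formula value problem~\cite{bus87,sc10} applies. Lemma~\ref{prop:simple-cdreds-mc} sends these lower bounds upward to $\CTLanMC(T)$ and $\CTLposMC(T)$, yielding the stated completeness results for all three problems simultaneously.

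Finally I would extract the $\leqcd$-equivalence: in every clause of the classification, the three problems are complete under $\leqcd$-reductions for a common class $\mathcal{C}\in\{\NC1,\LOGCFL,\P\}$, and each such $\mathcal{C}$ is closed under $\leqcd$-reductions because logtime-uniform $\AC{0}$ is contained in $\NC1\subseteq\LOGCFL\subseteq\P$. Hence, whenever $A$ and $B$ are both $\mathcal{C}$-complete under $\leqcd$, membership $A\in\mathcal{C}$ together with $\mathcal{C}$-hardness of $B$ gives $A\leqcd B$, and symmetrically $B\leqcd A$. I expect no substantive obstacle beyond verifying this closure property; everything else is a matter of assembling already-proved ingredients.
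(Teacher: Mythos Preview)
Your proposal is correct and mirrors the paper's argument: upper bounds are established for $\CTLposMC(T)$ and pushed down via Lemma~\ref{prop:simple-cdreds-mc}, lower bounds are established for $\CTLmonMC(T)$ (via Propositions~\ref{prop:ctlmon-mc-P-hardness}, \ref{prop:ctl-mon-mc-EG-is-P-hard}, \ref{prop:LOGCFL-hard}) and pushed up, and the $\leqcd$-equivalence is then read off from joint completeness for a class closed under $\leqcd$. The paper adds only the remark that a direct de~Morgan translation would fail in $\AC{0}$, which is commentary rather than additional proof content.
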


  This equivalence extends Lemma~\ref{lem:atomic-negation=no-negation}.
  We remark that this equivalence is not straightforward. 
  Simply applying de Morgan's laws to transform one problem into another requires 
  counting the number of negations on top of $\land$- and $\lor$-connectives. 
  This counting cannot be achieved by an $\AC0$-circuit and does not lead to the aspired reduction. 
  Here we obtain equivalence of the problems as a consequence 
  of our generic hardness proofs in Section~\ref{subsect:lower bounds}.

\section{Model Checking Extensions of CTL}\label{sect:extensions}

It has been argued that $\CTL$ lacks the ability to express fairness properties.
%What $\CTL$ lacks in practice is the ability to express fairness properties.
To address this shortcoming, Emerson and Halpern introduced $\ECTL$ in \cite{emha86}.
%%%
$\ECTL$ extends $\CTL$ with the $\Finfty$-operator, which states that for 
every moment in the future, the enclosed formula will eventually be satisfied again:
for a Kripke structure $K$, a path $x=(x_1,x_2,\ldots)$, and a path formula $\chi$ 
\[
  K,x\models \Finfty\chi \ \ \text{ iff } \ \ K,x^i \models \F\chi\ \mbox{ for all } i\in \N.
\]
The dual operator $\Ginfty$ is defined analogously.
As for \CTL, model checking for $\ECTL$ is known to be tractable. 
Moreover, our next result shows that even for all fragments,
model checking for \ECTL is not harder than for \CTL.

\begin{thm} \label{thm:ectl-mc}
  Let $T$ be a set of temporal operators. 
  Then $\ECTLMC(T)\equivcd\CTLMC(T')$
  and $\ECTLposMC(T)\equivcd\CTLposMC(T')$,
  where $T'$ is obtained from $T$ by substituting $\Finfty$ with $\F$ and $\Ginfty$ with $\G$.
\end{thm}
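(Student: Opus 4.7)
The plan is to establish $\equivcd$ by a pair of $\AC{0}$-reductions that combine a formula rewriting with a Kripke-structure modification, done in parallel for both $\ECTLMC$ vs.\ $\CTLMC$ and their positive counterparts.

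For the forward direction $\ECTLMC(T)\leqcd\CTLMC(T')$, I rewrite every occurrence of $\Finfty$ (in $\AFinfty,\EFinfty$) to $\F$ and of $\Ginfty$ (in $\AGinfty,\EGinfty$) to $\G$, and compensate by augmenting $K$ with a small local gadget for each kind of operator. Concretely, to simulate $\EFinfty\chi$ using only $\EF$, I attach for every $u\in W$ a ``cycle-witness'' component $\{u\}\times W$ with transitions mimicking $R$ and a distinguished \emph{return} state that is reached precisely when the simulated walk revisits $u$; then $K,w\models \EFinfty\chi$ iff $K',w\models \EF(\chi\land \EF\,\textit{return})$, since on finite Kripke structures $\EFinfty\chi$ means exactly that a $\chi$-state on some reachable cycle exists. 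To simulate $\EGinfty\chi$ using only $\EG$, I build a countdown staging product $K\times\{0,\ldots,|W|,f\}$ whose staging layers carry every atomic proposition (so any positive $\chi$ is trivially true there) and whose paths are forced to eventually commit to the ``final'' layer $f$ where $K$'s original labelling applies; this turns the $\EF\EG\chi$-semantics of $\EGinfty$ into pure $\EG$ on the product. The dual gadgets for $\AGinfty$ and $\AFinfty$ are obtained by swapping existential and universal quantification (equivalently, by complementing the witness conditions), which is possible without introducing negations in the formula, hence preserves $\CTLpos$.

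For the reverse direction $\CTLMC(T')\leqcd\ECTLMC(T)$, the identity reduction already works whenever $T$ contains directly every operator in $T'$. Otherwise $T$ contains only the infinitary variant of some operator in $T'$, and I adapt the hardness reductions from Section~\ref{subsect:lower bounds}: each of those reductions produces a Kripke structure whose relevant ``target'' states (such as the accepting-configuration marker $t$) play a unique role, and by replacing the outgoing edges of those targets with self-loops one ensures that any path reaching such a target remains there forever. On this modified structure, $\F$ and $\G$ coincide with $\Finfty$ and $\Ginfty$ at the relevant subformulas, so replacing $\F$ by $\Finfty$ and $\G$ by $\Ginfty$ in the constructed formula yields an $\ECTL(T)$-instance of the same truth value. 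The modification introduces no negations, so it carries over to $\CTLpos$.

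The main obstacle is the faithfulness of the gadgets across nested subformulas, since after the augmentation each subformula is evaluated not in $K$ but in the much larger product $K'$. I expect to handle this by a straightforward induction on formula structure, checking that the witness/staging components are ``inert'' for all operators not currently being simulated (their added states either loop in a controlled way or carry trivial labellings), so that truth values of subformulas are preserved between $K$ and $K'$. Because each gadget has polynomial size and depends on $K$ only through its transition table and labelling, the entire reduction is readily implementable by an $\AC{0}$ circuit family.
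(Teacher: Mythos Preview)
Your approach differs substantially from the paper's. The paper does \emph{not} construct direct $\leqcd$-reductions between the two problems; instead it establishes that $\ECTLMC(T)$ (resp.\ $\ECTLposMC(T)$) is complete for the same complexity class as $\CTLMC(T')$ (resp.\ $\CTLposMC(T')$). For the upper bounds it simply extends Algorithm~1 by a clause for $\EFinfty$ (membership in $\P$ for the general case being already known); for the lower bounds it takes the \emph{reflexive closure} of the Kripke structures produced by the hardness reductions of Section~\ref{subsect:lower bounds}, on which $\Finfty$ and $\Ginfty$ behave exactly like $\F$ and $\G$ at the relevant subformulae. Completeness for the same class under $\leqcd$ then yields $\equivcd$ for free, sidestepping entirely the gadget-interference issue you flag as the ``main obstacle''.

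Your direct forward reduction has genuine gaps. For the $\EFinfty$ gadget: once every $u\in W$ has an edge into its shadow component $\{u\}\times W$, the inner $\EF\,\textit{return}$ evaluated at a $\chi$-state $s$ holds as soon as \emph{some} state reachable from $s$ lies on a cycle---but since $R$ is total, every state reaches a cycle, so $\EF(\chi\land\EF\,\textit{return})$ collapses to $\EF\chi$ rather than $\EFinfty\chi$. For the $\EGinfty$ staging product: labelling the staging layers with all propositions trivializes only monotone $\chi$, yet $\CTLpos$ permits propositional negations (only $\CTL$-operators are excluded from the scope of $\neg$), so already for $\chi=\neg p$ your $\EG\chi$ on the product is false even when $K,w\models\EGinfty\chi$. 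The promised ``dual gadgets'' for $\AFinfty,\AGinfty$ inherit these defects and are not obtainable by a mere swap of path quantifiers without introducing negation. Finally, your reverse direction does not transform arbitrary $\CTLMC(T')$-instances but only the specific outputs of the hardness reductions; this is a hardness argument (essentially the paper's, though with a narrower structure modification than the reflexive closure) rather than a direct reduction, and to conclude $\equivcd$ you would still need an independent upper bound for $\ECTLMC(T)$.
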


\begin{proof}
  For the upper bounds, notice that $\ECTLMC(\ALL \cup \{\EFinfty, \AFinfty\}) \in \P$. 
  It thus remains to show that $\ECTLposMC(T) \in \LOGCFL$ for $T \subseteq \{\EX,\EF,\EFinfty\}$ and $T \subseteq \{\AX,\AG, \AGinfty\}$
  First, consider the case that $T \subseteq \{\EX, \EF, \EFinfty\}$.
  We modify Algorithm~1 to handle $\EFinfty$
  by extending the case distinction in lines \ref{alg:ctl-mc-upper-bound-logcfl-if-start}--\ref{alg:ctl-mc-upper-bound-logcfl-if-end} with the code fragment 
  given in Algorithm~1. 
  The algorithm for $T \subseteq \{\AX,\AG,\AGinfty\}$ is analogous and membership in $\LOGCFL$ follows from its closure under complementation.
  \begin{algorithm}[t]
    \caption{\label{alg:ectl-mc-upper-bound-logcfl}%
      Case distinction for $\EFinfty$}
    \begin{algorithmic}[1]
      \STATE \algorithmicelsif $\;\varphi = \EFinfty\alpha$ \algorithmicthen
      \makeatletter
      \begin{ALC@if}
      \STATE nondet. choose $k \leq \size{W}$ and a path $(w_i)_{1 \leq i \leq k}$ such that $(w,w_1)\in R^\star$, $(w_k,w_1) \in R$
        \STATE nondet. choose some $1 \leq i \leq k$ and $\mathrm{push}(S,(\alpha,w_i))$
      \ENDIF
      \makeatother
    \end{algorithmic}
  \end{algorithm}

  For the lower bounds, 
  %To show $\P$- and $\LOGCFL$-hardness of $\ECTLMC(T)$,
  we extend the proofs of Propositions~\ref{prop:ctlmon-mc-P-hardness}, \ref{prop:ctl-mon-mc-EG-is-P-hard} and \ref{prop:LOGCFL-hard} to handle sets $T$ 
  involving also the operators $\AFinfty$, $\AGinfty$, $\EFinfty$, and $\EGinfty$. 
  Therefore, we only need modify the accessibility relation $R$ of respective Kripke structure $K$ to be reflexive. 
  The hardness results follow by replacing $\F$ with $\Finfty$ and $\G$ with $\Ginfty$ in the respective reductions.
  
  First consider the case that $T$ contains an existential and a universal operator, say $T=\{\AFinfty,\EGinfty\}$.
  Let $M$, $x$, and $p$ be defined as in the proof of Proposition~\ref{prop:ctlmon-mc-P-hardness}. 
  We map $(M,x)$ to $(\tilde{K},c_1^0,\psi_1)$, where $\tilde{K}=(W,R,\eta)$ is the reflexive closure of the Kripke structure $K$ defined for the $\P$-hardness of 
  $\CTLMC(\{\AF,\EG\})$, $c_1^0 \in W$, and $\psi:=\psi_1\Big(\psi_2\big(\cdots \psi_{p(n)}(t)\big)\cdots\Big)$, where
  \[
        \begin{array}{@{}r@{\,}c@{\,}l@{}}
        \psi_i(x) & := &
        \left\{
        \renewcommand{\arraystretch}{1}
        \begin{array}{@{}l@{~~}l@{}}
          \AFinfty(d_i \land x), & \text{if $M$'s configurations in step $i$ are universal, }\\[2.5pt]
          \EGinfty(D_i \lor x), & \text{otherwise}, \\
        \end{array}
        \right.
        \end{array}
    \]  
  In $\tilde{K}$ it now holds that $d_i \in \eta(w)$ and $(w,w') \in R$ together imply that either $w = w'$ or $d_{i} \notin \eta(w')$.
  Hence, for all $w \in W$ and $1 \leq i \leq p(|x|)$, $\tilde{K},w \models \AFinfty(d_i \land x)$ iff $K,w \models \AF(d_i \land x)$, and  
  $\tilde{K},w \models \EGinfty(\bigvee_{i \neq j \in \{0,\ldots,p(n)\}} d_j \lor x)$ iff $K,w \models  \EG(\bigvee_{i \neq j \in \{0,\ldots,p(n)\}} d_j \lor x)$. 
  From this, correctness of the reduction follows.
  %So we eventually obtain
  %\[
  %  M \text{ accepts } x \iff K,c_1^0 \models \psi_1\Big(\psi_2\big(\cdots \psi_{p(n)}(t)\big)\cdots\Big),
  %\]
  The $\P$-hardness of $\CTLMC(T)$ for the remaining fragments follows analogously. 
  %Remark: Be more explicit about Proposition \ref{prop:prop:ctl-mon-mc-EG-is-P-hard} here?!
  
  As for $T \subseteq \{\EX,\EF,\EFinfty\}$,
  %Now, for the $\LOGCFL$-hardness of $\ECTLposMC(T)$ for all nonemtpy sets $T$, 
  we will show that $\ECTLmonMC(T)$ is $\LOGCFL$-hard under $\leqcd$-reductions for $T=\{\EFinfty\}$. 
  Let $C$, $x$, and $\ell$ be as in the proof of Proposition~\ref{prop:LOGCFL-hard}. We map the pair
  $(C,x)$ to the triple $(\tilde{K'},v_0,\varphi_0)$, where $\tilde{K'}=(V',E',\eta)$ is the reflexive closure of the Kripke structure $K'$ 
  defined for the $\LOGCFL$-hardness of $\CTLMC(\{\EF\})$, $v_0 \in V'$, and $\varphi_0$ is recursively defined via $(\varphi_i')_{0 \leq i \leq \ell}$ as
  \[
  \varphi_i :=
  \begin{cases}
    t, & \text{if $i=\ell$}, \\
    \EFinfty(d_{i+1} \land \varphi_{i+1}), & \text{if $i$ is even}, \\
    \bigwedge_{i=1,2} \EFinfty(d_{i+1} \land i \land \varphi_{i+1}), & \text{if $i$ is odd}.
  \end{cases}
  \]
  Again, we have that in $\tilde{K'}$, $d_i \in \eta(v)$ and $(v,v') \in E'$ together imply that either $v = v'$ or $d_{i} \notin \eta(v')$.
  It hence follows $\tilde{K'},v \models \EFinfty(d_i \land \varphi_{i})$ iff $K',v \models \EF(d_i \land  \varphi_{i})$,  for all $v \in V'$ and $1 \leq i \leq \ell$. 
  We conclude that $\ECTLmonMC(\{\EFinfty\})$ is $\LOGCFL$-hard. The case $T=\{\AGinfty\}$ follows analogously. 
\end{proof}

We will now consider $\CTLplus$, the extension of \CTL by Boolean combinations of path formulae which is defined as follows. 
A \emph{$\CTLplus$-formula} is a $\CTLstar$-formula where each pure temporal operator in a state formula occurs in the scope 
of a path quantifier. The set of all $\CTL$-formulae is a strict subset of the set of all $\CTLplus$-formulae, 
which again forms a strict subset of the set of all $\CTLstar$-formulae.
For example, $\AG\EF p$ and $\A(\G p \land \F q)$ are $\CTLplus$-formulae, but $\A\G\F p$ is not. 
However, $\CTL$ is as expressive as $\CTLplus$ \cite{emha85}.

By $\CTLplus(T)$ we denote the set of $\CTLplus$-formulae using the connectives $\{\land,\lor,\neg\}$ and
temporal operators in $T$ only. 
Analogous to the fragments $\CTLpos(T)$, $\CTLan(T)$, and $\CTLmon(T)$, 
we define $\CTLpluspos(T)$, $\CTLplusan(T)$, and $\CTLplusmon(T)$ as those fragments of $\CTLplus(T)$ that 
disallow temporal operators in the scope of negations,
contain negation signs only directly in front of atomic propositions, and
do not contain negation signs at all, respectively.

In contrast to \CTL, model checking for $\CTLplus$ 
is not tractable, but $\DeltaPtwo$-complete \cite{lamasc01}. 
Below we classify the complexity of model checking for both the full and the positive fragments of \CTLplus.

\begin{thm} \label{thm:ctl+-mc}
Let $T$ be a set of temporal operators containing at least one path quantifier. 
Then $\CTLplusMC(T)$ is 
  \begin{iteMize}{$-$}
    \item $\NC1$-complete if $T \subseteq \{\A,\E\}$, 
    \item $\P$-complete if $\{\X\} \subsetneq T \subseteq
      \{\A,\E,\X\}$, and 
    \item $\DeltaPtwo$-complete otherwise.
  \end{iteMize}
\end{thm}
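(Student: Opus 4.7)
The plan is to split the proof into the three cases of the theorem and, in each, establish matching upper and lower bounds.  The main ingredients are: bottom-up dynamic programming for the tractable cases, direct inheritance of hardness from Theorem~\ref{thm1}, and an adaptation of the $\DeltaPtwo$-hardness construction of Laroussinie, Markey and Schnoebelen~\cite{lamasc01} for the remaining case.

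For $T\subseteq\{\A,\E\}$, I would observe that since no pure temporal operator is available, every subformula of the shape $\A\psi$ or $\E\psi$ must have $\psi$ equal to a Boolean combination of state formulas.  The truth of such a $\psi$ on a path $x=(x_1,x_2,\ldots)$ depends only on $x_1$, so $\A\psi\equiv\E\psi\equiv\psi$ as state formulas.  Consequently $\CTLplusMC(T)$ reduces via an $\AC{0}$-transformation to the Boolean Formula Value Problem, which is $\NC1$-complete~\cite{bus87}.

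For $\{\X\}\subsetneq T\subseteq\{\A,\E,\X\}$, the $\P$-upper bound would be obtained by bottom-up dynamic programming over subformulas.  Because $\CTLplus$ forbids nesting of pure temporal operators, every subformula of the form $\Q\psi$ with $\Q\in\{\A,\E\}\cap T$ has $\psi$ of the shape $B(\X\alpha_1,\ldots,\X\alpha_k,\beta_1,\ldots,\beta_m)$, where $B$ is a propositional formula and all $\alpha_i,\beta_j$ are state formulas.  Assuming the truth of each $\alpha_i$ and $\beta_j$ at every state is already tabulated, deciding $\A\psi$ at a state $w$ amounts to evaluating $B$ on every $R$-successor $w'$ of $w$ (substituting the value of $\alpha_i$ at $w'$ and of $\beta_j$ at $w$) and taking the conjunction; $\E\psi$ is the analogous disjunction.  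This runs in polynomial time overall.  The matching $\P$-lower bound is inherited directly from Theorem~\ref{thm1}: since $T$ contains $\X$ together with a path quantifier, at least one of $\AX,\EX$ is a $\CTL$-operator of $\CTLplus(T)$, and the corresponding $\CTLMC(\{\AX\})$ or $\CTLMC(\{\EX\})$ is already $\P$-hard.

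In the remaining case $T$ contains a path quantifier together with some operator from $\{\F,\G,\U,\R\}$.  The $\DeltaPtwo$-upper bound follows from the general result of~\cite{lamasc01} on unrestricted $\CTLplus$ model checking.  For the lower bound, the plan is to take the $\DeltaPtwo$-hardness reduction of~\cite{lamasc01} and rewrite it using only operators from $T$.  Full negation is available, so the missing path quantifier is recovered via $\A\psi\equiv\neg\E\neg\psi$, and the missing temporal operators via the standard equivalences $\F\chi\equiv[\true\,\U\,\chi]$, $\G\chi\equiv[\false\,\R\,\chi]$, $\G\chi\equiv\neg\F\neg\chi$, and $[\chi\U\pi]\equiv\neg[\neg\chi\R\neg\pi]$.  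The main obstacle I expect is this final step: for every admissible $T$ one has to verify that the rewritten reduction really stays inside $\CTLplus(T)$ (in particular, does not accidentally create the nesting of pure temporal operators forbidden by the $\CTLplus$ syntax) and remains of polynomial size and $\leqcd$-computable.  Once this is checked subcase by subcase according to which of $\F,\G,\U,\R$ is present in $T$, the theorem follows.
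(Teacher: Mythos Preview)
Your proposal is correct and follows essentially the same approach as the paper: the Boolean Formula Value Problem for the first case, a labelling (bottom-up) algorithm plus the $\P$-hardness of $\CTLMC(\{\EX\})$ resp.\ $\CTLMC(\{\AX\})$ for the second, and the observation that full negation makes $\A,\E,\F,\G$ all expressible so that~\cite{lamasc01} applies directly for the third. Your anticipated ``main obstacle'' in the third case is a non-issue---the dualities $\A\leftrightarrow\E$, $\F\leftrightarrow\G$, $\U\leftrightarrow\R$ act at the path-formula level and never stack pure temporal operators, with only linear blowup---so the paper handles it in one sentence rather than a subcase analysis.
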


\begin{proof}
  If $T \subseteq \{\A,\E\}$ then deciding $\CTLplusMC(T)$ is equivalent to the problem of evaluating a propositional formula, which is known to be $\NC1$-complete \cite{bus87,sc10}.
  
%   If $T = \{\X\}$ then deciding whether $K,x \models \varphi \in \CTLplusMC(\{\X\})$ amounts to deciding whether $K,w \models \E\varphi$, 
%   where $K=(W,R,\eta)$ is a Kripke structure and $x=(w,x_2,\ldots) \in W^\omega$. 
%   Let $\mathCommandFont{SF}_\X(\varphi)$ denote the subformulae of $\varphi$ starting with an $\X$.
%   Notice that for all $\X\psi \in \mathCommandFont{SF}_\X(\varphi)$, $\psi$ is a propositional formula.
%   For a propositional formula $\phi$ and $w \in W$, define $c_\psi^w \in \{\false,\true\}$ via
%   \[
%     c_{\psi}^w := \true \iff K,w \models \psi.
%   \]
%   It holds that $K,w \models \E\varphi$ iff there exists a successor $w'$ of $w$ such that $K,w \models \E\varphi_{w'}$, 
%   where $\varphi_{w'}$ is obtained from $\varphi$ by replacing $\X\psi \in \mathCommandFont{SF}_\X(\varphi)$ with $c_{\psi}^{w'}$.
%   Notice that $\varphi'$ is again a propositional formula.
%   Thus, to decide whether $K,w \models \varphi$, we can for every successor $w'$ of $w$ compute value of each $c_{\psi}^{w'}$, 
%   evaluate $\varphi_{w'}$ in $w$ and accept if at least one $\varphi_{w'}$ was satisfied in $w$.
%   The claim follows from evaluation of propositional formulae being $\NC1$-complete.
  
  If $\{\X\} \subsetneq T \subseteq \{\A,\E,\X\}$, then $\CTLplusMC(T)$ can be solved using a labelling algorithm:
  Let $K=(W,R,\eta)$ be a Kripke structure, and $\varphi$ be a $\CTLplus(\{\A,\E,\X\})$-formula.
  Assume w.l.o.g.\ that $\varphi$ starts with an $\E$ and that it does not contain any $\A$'s.
  Compute $K,w \models \psi$ for all $w \in W$ and all subformulae $\E\psi$ of $\varphi$ such that $\psi$ is free of path quantifiers, 
  and replace $\E\psi$ in $\varphi$ with a new proposition $p_\psi$ while extending the labelling function $\eta$ such that $p_\psi \in \eta(w) \iff K,w \models \psi$. 
  Repeat this step until $\varphi$ is free of path quantifiers and denote the resulting (propositional) formula by $\varphi'$.
  To decide whether $K,w \models \varphi$ for some $w \in W$, it now suffices to check whether $\varphi'$ is satisfied by the assignment implied by $\eta(w)$.
  As for all of the above subformulae $\E\psi$ of $\varphi$, $\psi \in \CTLplus(\{\X\})$, it follows that $K,w \models \psi$ can be determined in polynomial time in the size of $K$ and $\psi$. 
  Considering that the number of labelling steps is at most $O(|\varphi|\cdot |W|)$ it follows that $\CTLplusMC(T)$ is in $\P$. 
  The $\P$-hardness follows from $\CTLMC(\{\EX\}) \leqcd \CTLplusMC(\{\E,\X\})$ resp.\ $\CTLMC(\{\AX\}) \leqcd \CTLplusMC(\{\A,\X\})$.
  
  For all other possible sets $T$, we have $T \cap \{\E,\A\} \neq \emptyset$ and $T \cap \{\F,\G,\U\} \neq \emptyset$. 
  Consequently, each of the temporal operators $\A$, $\E$, $\F$, and $\G$ can be expressed in $\CTLplus(T)$. 
  The claim now follows from \cite{lamasc01}.
\end{proof}

For the positive fragments of \CTLplus we obtain a more complex picture:

\begin{thm}  \label{thm:ctl+pos-mc}
Let $T$ be a set of temporal operators containing at least one path quantifier. 
Then $\CTLplusposMC(T)$ is 
  \begin{iteMize}{$-$}
    \item $\NC1$-complete if $T \subseteq \{\A,\E\}$, 
    \item $\LOGCFL$-complete if $T = \{\A,\X\}$ or $T=\{\E,\X\}$,
    \item $\P$-complete if $T = \{\A,\E,\X\}$,
    \item $\NP$-complete if $\E \in T$, $\A\not\in T$ and $T$ contains a pure
      temporal operator aside from $\X$,
    \item $\co\NP$-complete if $\A \in T$, $\E\not\in T$ and $T$ contains a pure
      temporal operator aside from $\X$, and
    \item $\DeltaPtwo$-complete otherwise.
  \end{iteMize}
\end{thm}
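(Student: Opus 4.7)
The plan is to treat the six cases in turn, leveraging Theorem~\ref{thm2_ext} for $\CTL$-fragments and Theorem~\ref{thm:ctl+-mc} for full $\CTLplus$. For $T\subseteq\{\A,\E\}$ no pure temporal operator can appear, so model checking collapses to propositional evaluation ($\NC1$-complete by~\cite{bus87,sc10}). For $T\in\{\{\E,\X\},\{\A,\X\}\}$ the $\LOGCFL$ upper bound is obtained by extending Algorithm~\ref{alg:ctl-mc-upper-bound-logcfl}: on encountering $\E\chi$ with $\chi$ a positive Boolean combination of $\X$-formulas of maximum $\X$-nesting $K$, nondeterministically guess a path prefix $w_0,\ldots,w_K$ and push the resulting position-indexed state-formula obligations onto the stack; the matching hardness is inherited from $\CTLposMC(\{\EX\})$ (resp.~$\CTLposMC(\{\AX\})$) via Theorem~\ref{thm2_ext}. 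For $T=\{\A,\E,\X\}$ the labelling algorithm from the proof of Theorem~\ref{thm:ctl+-mc} carries over to the positive fragment to yield $\P$-membership, and $\P$-hardness is inherited from $\CTLposMC(\{\EX,\AX\})$.

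For the $\NP$-case ($\E\in T$, $\A\notin T$, and some pure operator other than $\X$ in $T$), the upper bound is by a standard guess-and-check procedure: nondeterministically guess a lasso witness of polynomial length for every $\E$-subformula and verify the enclosed positive Boolean combination of pure temporal formulas on it; since no $\A$ occurs, there is no quantifier alternation. For $\NP$-hardness I would reduce from $\mathrm{SAT}$: given a CNF $\varphi=C_1\land\cdots\land C_m$ over $x_1,\ldots,x_n$, build a Kripke structure whose paths enumerate truth assignments (two branches per variable) and label each branch with the indices of the clauses it satisfies, then express satisfiability as $\E(\F c_1\land\cdots\land\F c_m)$; for a different pure operator ($\G$, $\U$, or $\R$) in place of $\F$, small adaptations of the accessibility relation and labelling suffice (adding self-loops for $\G$, linearising for $\U$, dualising labels for $\R$). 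Case~5 follows by duality, combining operator-level $\E\leftrightarrow\A$ exchange with closure of $\co\NP$ under complementation.

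For the remaining $\DeltaPtwo$-case (both $\A,\E\in T$ together with a pure operator other than $\X$ in $T$), the upper bound is immediate from the $\DeltaPtwo$-membership of $\CTLplusMC$ established in~\cite{lamasc01}. The main obstacle is the matching lower bound, since the reduction of~\cite{lamasc01} uses negations freely. My plan is to redesign it so that the alternation between $\E$- and $\A$-subformulas itself encodes the oracle-query alternation of a $\P^{\NP}$-machine, thereby keeping all temporal operators outside the scope of any negation; as a fallback I would amplify the SAT-reduction from case~4 by wrapping it in an outer $\A$-quantifier that enumerates oracle answers in a binary-search fashion, transferring $\DeltaPtwo$-hardness from a suitable oracle-machine word problem. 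Writing a positive encoding that is robust across all admissible choices of $T$ (in particular for $T$ containing only $\U$ or only $\R$ as its non-$\X$ operator) is what I expect to be the most delicate step of the whole proof.
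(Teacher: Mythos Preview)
Your treatment of cases 1--5 is essentially aligned with the paper.  The $\NC1$ and $\P$ cases are handled the same way; for $T\in\{\{\E,\X\},\{\A,\X\}\}$ the paper also adapts Algorithm~\ref{alg:ctl-mc-upper-bound-logcfl} (guessing a single successor per $\E$), and for the $\NP$/$\co\NP$ lower bound the paper, like you, reduces from $\ThreeSAT$ (citing the Sistla--Clarke reduction for $\{\E,\F\}$ and giving an explicit $\{\E,\G\}$ variant).  One genuine difference: for the $\NP$ upper bound you argue directly by guessing a lasso for every $(\text{state},\E\text{-subformula})$ pair and evaluating bottom-up, relying on monotonicity so that wrong guesses can only underestimate.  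The paper instead observes that the bottom-up labelling algorithm is a \emph{monotone} polynomial-time Turing reduction to an $\NP$ oracle and invokes Selman's result~\cite{selman82} that $\NP$ is closed under such reductions.  Both arguments are correct; yours is more elementary, the paper's more uniform (it yields the $\co\NP$ bound by the same closure property without separate reasoning).

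The real gap is case~6.  You correctly identify that the $\DeltaPtwo$-hardness proof of~\cite{lamasc01} uses negation freely and must be reworked, but what you offer is a plan, not a proof, and your fallback (``wrap the $\mathrm{SAT}$-reduction in an outer $\A$ enumerating oracle answers by binary search'') is too vague to assess.  The paper does this work explicitly and it is the technical heart of the theorem: it takes the $\SNSAT$ reduction of Laroussinie \emph{et al.}, keeps the Kripke structure (with refined labels $c_i$, $s_i^{jk}$), and rewrites the formulae $\psi_k$ separately for odd and even $k$ so that the odd levels use only $\E$ and $\G$ while the even levels (which appear negated) use only $\A$ and $\G$, each time replacing every occurrence of $\F\alpha$ by a structure-specific $\G$-formula that is equivalent \emph{on the particular Kripke structure} (e.g.\ $\F x_i$ becomes $\neg\G\neg x_i$ which, along the constrained paths, is equivalent to $\G\bigvee(\Phi\setminus\{\overline{x}_i\})$).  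The point is that the elimination of $\F$ is not a general logical equivalence but exploits the layered shape of the $\SNSAT$ structure; your proposal does not indicate awareness of this mechanism, and without it the $\DeltaPtwo$-hardness for $\CTLpluspos(\{\A,\E,\G\})$ and $\CTLpluspos(\{\A,\E,\F\})$ remains open in your write-up.
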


\begin{proof}
  The first and third claim follow from Theorem~\ref{thm:ctl+-mc} and the monotone formula value problem being $\NC1$-complete \cite{sc10}.
  
  For the second claim, consider the case $T=\{\E,\X\}$.
  It is straightforward to adopt Algorithm~1 to guess a successor $w'$ of the current state once for every path quantifier $\E$ 
  that has been read and decompose the formula w.r.t.\ $w'$. For $T=\{\A,\X\}$ analogous arguments hold.
  
  The fourth claim can be solved with a labelling algorithm analogously to the algorithm for $\CTLplusMC(\{\A,\E,\X\})$.
  In this case, however, whole paths need to be guessed in the Kripke structures.
  Hence, we obtain a polynomial time algorithm deciding $\CTLplusposMC(T)$ using an oracle $B \in \NP$ (resp. $B \in \coNP$) .
  This algorithm is furthermore a monotone $\leqpt$-reduction from $\CTLplusposMC(T)$ to $B$, 
  in the sense that for any deterministic oracle Turing machine $M$ that executes the algorithm, 
  \[
    A \subseteq B  \implies L(M,A) \subseteq L(M,B),
  \]
  where $L(M,X)$ is the language recognized by $M$ with oracle $X$.
  Both $\NP$ and $\co\NP$ are closed under monotone $\leqpt$-reductions \cite{selman82}.
  We thus conclude that $\CTLplusposMC(T) \in \NP$ (resp.\ $\CTLplusposMC(T) \in \coNP$).
  
  As for the $\NP$-hardness of $\CTLplusposMC(T)$, note that the reduction from $\ThreeSAT$ to $\LTLMC(\{\F\})$, 
  the model checking problem for linear temporal logic using the $\F$-operator only, given by Sistla and Clarke in \cite{sicl85} is a reduction to $\CTLplusposMC(\{\E,\F\})$ indeed.
  The $\NP$-hardness of $\CTLplusposMC(\{\E,\G\})$ is obtained by a similar reduction:
  let $\varphi$ be a propositional formula in $\ThreeCNF$, \emph{i.e.}, 
  $\varphi=\bigwedge_{i=1}^n C_i$ with $C_i=\ell_{i1}\lor \ell_{i2} \lor \ell_{i3}$ and 
  $\ell_{ij}=x_k$ or $\ell_{ij}=\neg x_k$ for all $1 \leq i \leq n$, all $1 \leq j \leq 3$, and some $1 \leq k \leq m$. 
  Recall that for a set $A$, $\bigvee A$ denotes the disjunction $\bigvee_{a \in A} a$.
  We map $\varphi$ to the triple $(K,y_0,\psi)$, where $K=(W,R,\eta)$ is the Kripke structure given in \eqref{eq:ctlpluspos-mc-1} and 
  $\psi := \E \bigwedge_{i=1}^n \bigvee_{j=1}^3 \G \bigvee (\Phi\setminus\{\mathord\sim \ell_{ij}\})$ with 
  $\Phi := \{y_0,y_i,x_i,\overline{x}_i \mid 1 \leq i \leq m\}$
  and $\mathord\sim \ell_{ij}$ denoting the complementary literal of $\ell_{ij}$.
  \begin{align} \label{eq:ctlpluspos-mc-1}
    \begin{split}
      W &:= \{ y_0 \} \cup \{ x_i, \overline{x}_i, y_i \mid 1 \leq i \leq m\}, \\
      R &:= \{ (y_{i-1},x_i),(x_i,y_i),(y_{i-1},\overline{x}_i),(\overline{x}_i,y_i) \mid 1 \leq i \leq m \} \cup \{(y_m,y_m)\}, \\
      \eta(w) &:= \{w\} \text{ for all } w \in W.
    \end{split}
  \end{align}
  Note that the above reductions prove hardness for $\CTLplusmonMC(T)$ already.
  The $\co\NP$-hardness of $\CTLplusposMC(\{\A,\G\})$ and $\CTLplusposMC(\{\A,\F\})$ follows from the same reductions.
  
  As for the the last claim, note that the $\DeltaPtwo$-hardness of $\CTLplusMC(\{\A,\E,\F,\G\})$ carries over to $\CTLplusmonMC(\{\A,\E,\F,\G\})$, 
  because any $\CTLplus(\{\A,\E,\F,\G\})$-formula can be transformed into a $\CTLplusan(\{\A,\E,\F,\G\})$-formula, 
  in which all negated atoms $\neg p$ may be replaced by fresh propositions $\overline{p}$ that are mapped into all states of the Kripke structure whose label does not contain $p$.
  It thus remains to prove the $\DeltaPtwo$-hardness of $\CTLplusposMC(\{\A,\E,\F\})$ and $\CTLplusposMC(\{\A,\E,\G\})$.
  Consider $\CTLplusposMC(\{\A,\E,\G\})$. 
  Laroussinie \emph{et al.{}}\ reduce from $\SNSAT$, that is the problem to decide, 
  given disjoint sets $Z_1,\ldots,Z_n$ of propositional variables from $\{z_1,\ldots,z_p\}$ and a list 
  $
    \varphi_1(Z_1),
    \varphi_2(x_1,Z_2),
    \ldots,
    \varphi_n(x_1,\ldots,x_n,Z_n)
  $
  of formulae in conjunctive normal form, whether $x_n$ holds in the unique valuation $\sigma$ defined by 
  \begin{equation} \label{eq:ctl+pos-mc-1}
    \sigma(x_i) = \true  \iff \varphi_i(x_1,\ldots,x_{i-1},Z_i) \text{ is satisfiable.}
  \end{equation}
  An instance $I$ of $\SNSAT$ is transformed to the Kripke structure $K$ depicted in Figure~\ref{fig:ctl+posmc-DeltaP2-hard-1}
  \begin{figure}[t]
    \centering
    \includegraphics[width=\textwidth]{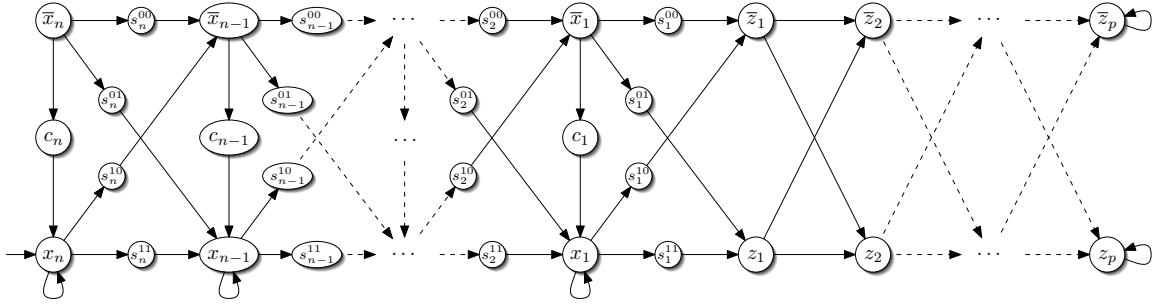}
    \caption{\label{fig:ctl+posmc-DeltaP2-hard-1}%
      Extended version of the Kripke structure constructed in~\cite[Figure~3]{lamasc01}.
    }
  \end{figure}
  and the formula $\psi_{2n-1}$ that is recursively defined as
  \begin{align*}
    \psi_k := \E \bigg[ \underbrace{\G\Big( \bigvee_{i=1}^n \overline{x}_i \limplies 
                        \E\big(\neg\F\bigvee_{i=1}^n (s_i^{00} \lor s_i^{01} \lor s_i^{10} \lor s_i^{11}) \land \F(\bigvee_{i=1}^n x_i \not\limplies \psi_{k-1})\big) \Big)}_{(A)} \hskip4em \\ 
                        \land \ \underbrace{\vphantom{\bigwedge_i}\G\Big( \bigwedge_{i=1}^n \neg c_i\Big)}_{(B)} \ 
                        \land \ \underbrace{\bigwedge_{i=1}^n \Big(\F x_i \limplies \bigwedge_j \bigvee_m \F \ell_{i,j,m} \Big)}_{(C)}
                 \bigg],
  \end{align*}
  for $1 \leq k \leq n$, $\psi_0:= \true$, and $\varphi_i = \bigwedge_j \bigvee_m \ell_{i,j,m}$, where the $\ell_{i,j,m}$'s are literals over $\{x_1,\ldots,x_n\} \cup Z_i$. Note that the structure $K$ from Figure~\ref{fig:ctl+posmc-DeltaP2-hard-1} differs from the Kripke structure constructed in \cite{lamasc01} in that we introduce different labels $c_i$ and $s_i^j$ for $1 \leq i \leq n$ and $j \in \{00,01,10,11\}$, as we need to distinguish between the states later on.
  The intuitive interpretation of $(B)$ is that the existentially quantified path does actually encode an assignment of $\{x_1,\ldots,x_n\}$ to $\{\false,\true\}$, 
  while $(C)$ states that this assignment coincides with $\sigma$ on all propositions that are set to $\true$. 
  Lastly $(A)$ expresses the recursion inherent in the definition of $\SNSAT$.
%   The correctness of this construction is asserted by the following result, which we include here without proof.
%   
%   \begin{lem}[{\cite[Lemma~4.1]{lamasc01}}]
%     For $k \in \N$ and $1 \leq i \leq n$,
%     \begin{enumerate}
%       \item if $k \geq 2i-1$ then $\sigma(x_i) = \true \iff K,x_i \models \psi_k$,
%       \item if $k \geq 2i$ then $\sigma(x_i) = \false \iff K,\overline{x_i} \models \psi_k$,
%     \end{enumerate}
%     where $\sigma$ is the unique valuation defined in \eqref{eq:ctl+pos-mc-1}.
%   \end{lem}
%   
  It holds that $I \in \SNSAT \iff K,x_n \models \psi_{2n-1}$ (see \cite{lamasc01} for the correctness of this argument).
  
  We modify the given reduction to not use $\F$.
  First note that $\psi_{k-1}$ occurs negatively in $\psi_k$.
  We will therefore consider the formulae $\psi_{2n-1},\psi_{2n-3},\ldots,\psi_1$ and $\neg \psi_{2n-2},\neg\psi_{2n-4},\ldots,\neg\psi_2$ separately. In $\psi_{2n-1},\psi_{2n-3},\ldots,\psi_1$ replace
  \begin{iteMize}{$-$}
    \item $(A)$ with $\displaystyle\G\Big( \bigvee_{i=1}^n \overline{x}_i \limplies \E\big(\G\bigwedge_{i=1}^n(\neg s_i^{00} \land \neg s_i^{01} \land \neg s_i^{10} \land \neg s_i^{11}) \land \G(\bigvee_{i=1}^n \overline{x}_i \lor \bigvee_{i=1}^n c_i \lor \neg \psi_{k-1})\big)\Big)$,
    \item $(C)$ with $\displaystyle\bigwedge_{i=1}^n \Big(\G \neg x_i \lor \bigwedge_j \bigvee_m \G \bigvee (\Phi\setminus\{\mathord\sim\ell_{i,j,m}\}) \Big)$;
  \end{iteMize}
  and in $\neg \psi_{2n-2},\neg\psi_{2n-4},\ldots,\neg\psi_2$ replace
  \begin{iteMize}{$-$}
    \item $(A)$ with $\displaystyle\bigvee_{1 \leq i \leq n}\G\Big( \bigvee (\Phi\setminus\{\overline{x}_i\}) \lor \A\big( \G (\bigvee\Phi\setminus\{c_i\}) \lor \G(c_i \lor \psi_{k-1}\big) \Big)$,
    \item $(B)$ with $\displaystyle\bigvee_{i=2}^n\G\bigvee(\Phi \setminus \{s_i^{00},s_i^{01},s_{i-1}^{01},s_{i-1}^{11}\})$, 
    and
    \item $(C)$ with $\displaystyle\bigvee_{i=1}^n \Big(\G\bigvee(\Phi\setminus\{\overline{x}_i\}) \land \bigvee_j \bigwedge_m \G \neg \ell_{i,j,m} \Big)$,
  \end{iteMize}
  where $\Phi := \{x_i,\overline{x}_i,c_i,s_i^{00},s_i^{01},s_i^{10},s_i^{11} \mid 1\leq i \leq n\} \cup \{ z_i, \overline{z}_i \mid 1 \leq i \leq p \}$ is the set of all propositions used in $K$.
  Denote the resulting formulae by $\psi_k'$, $k\geq 0$.
  In $\psi_k'$, all negations are atomic and only the temporal operators $\E,\A$ and $\G$ are used. 
  %One can easily verify that $K,x_k \models \psi_k \iff K,x_k \models \psi_k'$ for all $0 \leq k < 2n$.
  
  To verify that $K,x_k \models \psi_k \iff K,x_k \models \psi_k'$ for all $0 \leq k < 2n$, consider $\psi_{k}$ with $k$ odd first. 
  Suppose $K,x_k \models \psi_k$. Then, by $(A)$, there exists a path $\pi$ in $K$ such that whenever some $\overline{x}_i$ is labelled in the current state $\pi_p$, then there exists a path $\pi'$ starting in $\pi_p$ that never visits any state labelled with $s_i^{j}$, $1 \leq i \leq n$, $j \in \{00,01,10,11\}$, and eventually falsifies $\psi_{k-1}$ because it reaches a state where neither $\overline x_i$ nor $c_i$ holds for all $1\leq i \leq n$.  
  Hence, by construction of $K$, $\pi'$ has to visit the states labelled with $c_i$ and $x_i$ for $i$ such that $\overline{x}_i \in \eta(\pi_p)$.
  This is equivalent to the existence of a path $\pi'$ starting in $\pi_p$ which never visits any state labelled with $s_i^{j}$, $1 \leq i \leq n$, $j \in \{00,01,10,11\}$, and that falsifies $\psi_{k-1}$ if the current state is not labelled with $c_i$ or $\overline{x}_i$ for all $1 \leq i \leq n$. 
  Hence the substitution performed on $(A)$ does not alter the set of states in $K$ on which the formula is satisfied. 
  
  The formula $(C)$, on the other hand, states that whenever the path $\pi$ quantified by the outmost $\E$ in $\psi_k$ visits the state labelled $x_i$, then for every clause $j$ in the $i$th formula $\varphi_i$ of given $\SNSAT$ instance at least one literal $\ell_{i,j,m}$ occurs in the labels on $\pi$ (\emph{i.e.}, $\varphi_i$ is satisfied by the assignment induced by $\pi$). The path $\pi$ is guaranteed to visit either a state labelled $x_i$ or a state labelled $\overline{x}_i$ but not both, by virtue of the subformula $(B)$. Therefore, the eventual satisfaction of $x_i$ is equivalent to globally satisfying $\neg x_i$, whereas the satisfaction of $\varphi_i$ can be asserted by 
  requiring that for any  clause some literal is globally absent from the labels on $\pi$.
  Thus the substitution performed on $(C)$ does not alter the set of states on which the formula is satisfied either. 
  Concluding, $K,x_k \models \psi_k \iff K,x_k \models \psi_k'$ for all odd $0 \leq k < 2n$.
  
  Now, if $k$ is even, then 
  \begin{align*}
    \neg \psi_k \equiv \A \bigg[ \underbrace{\F\Big( \bigvee_{i=1}^n \overline{x}_i \land
                        \A\big(\F\bigvee_{i=1}^n (s_i^{00} \lor s_i^{01} \lor s_i^{10} \lor s_i^{11}) \lor \G(\bigvee_{i=1}^n x_i \limplies \psi_{k-1})\big) \Big)}_{(A)} \hskip4em \\ 
                        \lor \ \underbrace{\vphantom{\bigvee_i}\F\Big( \bigvee_{i=1}^n c_i\Big)}_{(B)} \ 
                        \lor \ \underbrace{\bigvee_{i=1}^n \Big(\F x_i \land \bigvee_j \bigwedge_m \G \neg \ell_{i,j,m} \Big)}_{(C)}
                 \bigg].
  \end{align*}
  Here, $(A)$ asserts that on all paths $\pi$ there is a state $\pi_p$ such that $\overline{x}_i \in \eta(\pi_p)$ for some $1 \leq i \leq n$ and all paths $\pi'$ starting in $\pi_p$ eventually visit a state labelled with $s_i^{j}$, $1 \leq i \leq n$, $j \in \{00,01,10,11\}$, or satisfy $\psi_{k-1}$ whenever $x_i \in \eta(\pi_p)$ for some $1 \leq i \leq n$. By construction of $K$, this is equivalent to stating the all paths $\pi'$ either pass the state labelled $c_i$ and globally satisfy $c_i \lor \psi_{k-1}$ or do not pass the state labelled $c_i$. As for the states in $K$ the formula $\F\big(\bigvee_{i=1}^n\overline{x_i} \land \chi\big) \equiv \bigvee_{i=1}^n\F\big(\overline{x_i} \land \chi\big)$ is satisfied iff $\bigvee_{i=1}^n\G\big(\bigvee (\Phi\setminus\{\overline{x}_i\}) \lor \chi\big)$ is satisfied, 
  the set of states in $K$ on which the $\psi_k$ is satisfied remains unaltered when substituting $(A)$ with 
  $\bigvee_{1 \leq i \leq n}\G\big( \bigvee (\Phi\setminus\{\overline{x}_i\}) \lor \A\big( \G (\bigvee\Phi\setminus\{c_i\}) \lor \G(c_i \lor \psi_{k-1}\big) \big)$.
  
  Similarly, the set of states in $K$ on which the $\psi_k$ is satisfied remains unaltered when substituting $(B)$ with $\bigvee_{i=2}^n\G\bigvee(\Phi \setminus \{s_i^{00},s_i^{01},s_{i-1}^{01},s_{i-1}^{11}\})$, as any path in $K$ that visits a state labelled with some $c_i$ cannot pass via states labelled with $s_i^{00}$, $s_i^{01}$, $s_{i-1}^{01}$, or $s_{i-1}^{11}$. 
  
  Finally, the equivalence of $\psi_k$ with $\bigvee_{i=1}^n \big(\G\bigvee(\Phi\setminus\{\overline{x}_i\}) \land \bigvee_j \bigwedge_m \G \neg \ell_{i,j,m} \big)$ follows from arguments similar to those for the $(C)$ part in the case that $k$ is odd.
  We conclude that $K,x_k \models \psi_k \iff K,x_k \models \psi_k'$ for all $0 \leq k < 2n$.
  Hence, $\CTLplusposMC(\{\A,\E,\G\})$ is $\DeltaPtwo$-hard.
  
  For $T=\{\A,\E,\F\}$ similar modifications show that $\CTLplusposMC(T)$ is $\DeltaPtwo$-hard, too. This concludes to proof of Theorem~\ref{thm:ctl+pos-mc}.
\end{proof}

Finally consider $\ECTLplus$, the combination of $\ECTL$ and $\CTLplus$. 
One can adapt the above hardness and membership proofs to hold for $\Finfty$ and $\Ginfty$ instead of $\F$ and $\G$:
For example, to establish the $\DeltaPtwo$-hardness of $\ECTLplusposMC(T)$ in case $T=\{\A,\E,\Ginfty\}$ we modify $K$ such that the states labelled $x_n$ and $\overline{x}_n$ are reachable from $z_p$ and $\overline{z}_p$ and assert that (a) the path quantified by the outmost path quantifier in $\psi_k$, $1 \leq i < 2n$, additionally satisfies $\bigwedge_{i=1}^n (\Ginfty\neg x_i \lor \Ginfty \neg \overline{x}_i)$ and (b) whenever $\overline{x}_i$ is labelled, then there exists a path that all but a finite number of times satisfies $x_i$. The changes if $\Finfty$ is available instead of $\Ginfty$ follow by the duality principle of these operators.
For its model checking problem we hence obtain: 
%Hence, we can extend the above classifications to $\ECTLplusMC$ and $\ECTLplusposMC$:

\begin{cor}
  Let $T$ be a set of temporal operators containing at least one path quantifier and let 
  $T'$ by obtained from $T$ by substituting $\Finfty$ with $\F$ and $\Ginfty$ with $\G$.
  Then $\ECTLplusMC(T) \equivcd \CTLplusMC(T')$
  and $\ECTLplusposMC(T) \equivcd \CTLplusposMC(T)$.
  %where $T'$ is obtained from $T$ by substituting $\Finfty$ with $\F$ and $\Ginfty$ with $\G$.
\end{cor}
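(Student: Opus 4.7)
The plan is to show that, for every admissible $T$, both $\ECTLplusMC(T)$ and $\CTLplusMC(T')$ are $\leqcd$-complete for exactly the same complexity class (one of $\NC1$, $\LOGCFL$, $\P$, $\NP$, $\coNP$, $\DeltaPtwo$), and analogously for the positive fragments. Once this is established, $\leqcd$-equivalence is immediate, and the two statements of the corollary follow. So the work splits into the usual two halves, upper bounds for $\ECTLplus$ and lower bounds for $\ECTLplus$, mirroring the case distinction of Theorems~\ref{thm:ctl+-mc} and~\ref{thm:ctl+pos-mc}.

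For the upper bounds I would adapt each membership algorithm used in those theorems to handle $\Finfty$ and $\Ginfty$. If $T\subseteq\{\A,\E\}$ the operators $\Finfty/\Ginfty$ cannot occur, so there is nothing to do. For the $\LOGCFL$ cases $T=\{\E,\X,\Finfty\}$ and the dual case with $\Ginfty$, extend the adapted Algorithm~1 by the additional $\EFinfty$-branch already used in Theorem~\ref{thm:ectl-mc} (guess a reachable lasso and a state on it). For the $\P$ case, combine the labelling algorithm from the proof of Theorem~\ref{thm:ctl+-mc} with the $\P$ upper bound for full $\ECTL$ to evaluate the quantifier-free subformulae involving $\Finfty/\Ginfty$. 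For the $\NP$/$\coNP$/$\DeltaPtwo$ cases, the labelling/oracle algorithms work unchanged once the inner path-evaluation routine recognises $\Finfty/\Ginfty$ along the guessed ultimately periodic path, which is polynomial.

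For the lower bounds I would replay the hardness reductions of Theorems~\ref{thm:ctl+-mc} and~\ref{thm:ctl+pos-mc} with the device already used in Theorem~\ref{thm:ectl-mc}: pass to the reflexive closure of the constructed Kripke structure and replace every $\F$ by $\Finfty$ and every $\G$ by $\Ginfty$ in the produced formulae. In each of those reductions the subformulae $\F\psi$ and $\G\psi$ are guarded by distinguishing labels (time stamps $d_i$, or the labels $c_i,s_i^{jk},x_i,\overline{x}_i$ of the $\DeltaPtwo$-hardness reduction), and in the reflexive closure no path can revisit a state carrying the same guard except via the newly added self-loop. Consequently $\Finfty(d_i\land\psi)\equiv\F(d_i\land\psi)$ and $\Ginfty(D_i\lor\psi)\equiv\G(D_i\lor\psi)$ on the reflexive closure, so the correctness arguments of the original reductions transport verbatim. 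The paragraph preceding the corollary already sketches the extra guards needed for $T=\{\A,\E,\Ginfty\}$, namely the conjuncts $\bigwedge_{i=1}^n(\Ginfty\neg x_i\lor\Ginfty\neg\overline{x}_i)$, ensuring that the witnessing path is not eventually trapped in the self-loop of an $x_i$- or $\overline{x}_i$-state.

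The main obstacle is verifying the replacement carefully for the $\DeltaPtwo$-hardness proof in the positive case, since the formulae $\psi_k'$ constructed in Theorem~\ref{thm:ctl+pos-mc} are intricate and each $\G$ has to be replaced by a guarded $\Ginfty$. One has to re-examine the state-by-state equivalences of that proof and check that the reflexive self-loops together with the guards do not introduce spurious $\Finfty$- or $\Ginfty$-witnesses; the dual for $\Finfty$ instead of $\Ginfty$ follows by de Morgan. Once this case is settled, all remaining cases reduce to routine adaptations, and combining the resulting completeness statements yields both $\leqcd$-equivalences claimed in the corollary.
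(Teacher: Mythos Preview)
Your overall strategy---establish that $\ECTLplusMC(T)$ and $\CTLplusMC(T')$ are complete for the same class in every case, by adapting the upper and lower bounds of Theorems~\ref{thm:ctl+-mc} and~\ref{thm:ctl+pos-mc}---is exactly the paper's approach, and the upper-bound adaptations you describe are fine.

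There is, however, a genuine gap in your treatment of the $\DeltaPtwo$ lower bound. You propose to reuse the reflexive-closure trick from Theorem~\ref{thm:ectl-mc} and claim that ``in each of those reductions the subformulae $\F\psi$ and $\G\psi$ are guarded by distinguishing labels.'' That is true for the $\P$- and $\LOGCFL$-hardness proofs, where every $\F/\G$ sits under a depth label $d_i$, but it is \emph{not} true for the $\SNSAT$ reduction of Theorem~\ref{thm:ctl+pos-mc}. There, subformulae such as $(B)=\G\bigl(\bigwedge_i\neg c_i\bigr)$ or the conjuncts $\G\neg\ell_{i,j,m}$ carry no unique positional guard. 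On the reflexive closure a path may visit a $c_i$-state once, then self-loop forever at a non-$c_i$-state, satisfying $\Ginfty\bigl(\bigwedge_i\neg c_i\bigr)$ while violating the intended $(B)$. So the substitution $\G\mapsto\Ginfty$ does not preserve correctness, and the labels $c_i,s_i^{jk},x_i,\overline{x}_i$ you cite do not play the role of depth stamps here.

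The paper's fix is different from reflexive closure: it makes the structure genuinely cyclic by adding edges from the $z_p/\overline{z}_p$ states back to $x_n/\overline{x}_n$, so that every infinite path traverses the assignment gadget infinitely often. The extra conjunct $\bigwedge_{i=1}^n(\Ginfty\neg x_i\lor\Ginfty\neg\overline{x}_i)$ is then needed to force the path to \emph{stabilise} on one assignment across these infinitely many traversals---not, as you wrote, to avoid being trapped in a self-loop. With back edges in place, a subformula like $\Ginfty\bigl(\bigwedge_i\neg c_i\bigr)$ really does force the path to avoid all $c_i$-states cofinally, which on a stabilised path means avoiding them on every pass. Your proposal becomes correct once you replace the reflexive-closure device by this back-edge construction in the $\DeltaPtwo$ case and reinterpret the guard accordingly.
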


\section{Conclusion}\label{sect:conclusion}

We have shown (Theorem~\ref{thm2}) that model checking for $\CTLpos(T)$ %for nonempty $T$
is already \P-complete for most fragments of \CTL. 
Only for some weak fragments, model checking becomes easier: 
if $T \subseteq \{\EX,\EF\}$ or $T\subseteq \{\AX,\AG\}$, then $\CTLposMC(T)$ is \LOGCFL-complete.
In the case that no $\CTL$-operators are used, %(\emph{i.e.}, ~$T=\emptyset$),
\NC1-completeness of evaluating propositional formulae applies. % (cf. Figure~\ref{fig:ctl-mc-BF}).
As a direct consequence (Theorem~\ref{thm1}), model checking for $\CTL(T)$ is \P-complete
for every nonempty $T$. 
This shows that for the majority of interesting fragments, model
checking $\CTL(T)$ is inherently sequential and cannot be 
sped up using parallelism.

While all the results above can be transferred to $\ECTL$ (Theorem~\ref{thm:ectl-mc}), 
$\CTLplus$ and $\ECTLplus$ exhibit different properties. 
For both logics, the general model checking problem was shown to be complete for $\DeltaPtwo$
in \cite{lamasc01}. 
Here we proved that model checking fragments of $\CTLplus(T)$ and
$\ECTLplus(T)$ for $T \subseteq \{\A,\E,\X\}$ remains tractable,
while the existential and the universal fragments of
$\CTLpluspos(T)$ and $\ECTLpluspos(T)$ containing temporal operators
other than $\X$ are complete for $\NP$ and $\co\NP$, respectively. 

  \begin{figure}
    \centering
    \includegraphics[width=0.5\linewidth]{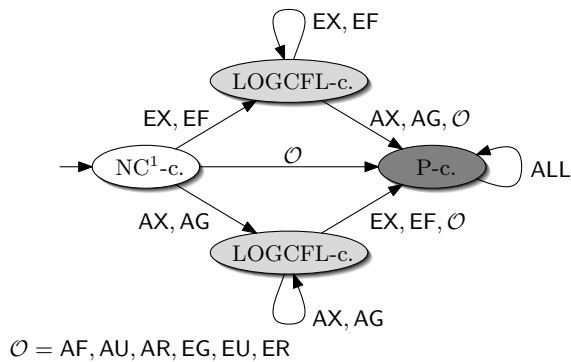}
    \caption{\label{fig:ctl-mc-BF}%
      Complexity of $\CTLposMC(T)$ 
      for all sets $T$ of \CTL-operators
      (depicted as a ``finite automaton''
       where states indicate completeness results and
       arrows indicate an increase of the set of \CTL-operators).
    }
  \end{figure}
  
Instead of restricting only the use of negation as done in this paper,
one might go one step further and restrict the allowed
Boolean connectives in an arbitrary way.  
One might, \emph{e.g.}, allow the exclusive-OR as the only propositional connective. 
This has been done for the case of linear temporal logic LTL in
\cite{bamuscscscvo07entcs}, where the complexity of $\LTLMC(T,B)$ for an
arbitrary set $T$ of temporal operators and $B$ of propositional
connectives was studied.
For example, restricting the Boolean connectives
to only one of the functions AND or OR leads to many
$\NL$-complete fragments in the presence of certain sets of temporal operators.
However a full classification is still open.

Considering the CTL variants considered in this paper, plus $\CTLstar$, over \emph{arbitrary} sets of Boolean operators
would be one way to generalise our results. In the case of $\CTLplus$ and $\CTLstar$,
where model checking is intractable \cite{emle87,schnoeb02,lamasc01},
such a more fine-grained complexity analysis could help draw a tighter border
between fragments with tractable and intractable model checking problems.
As for the corresponding satisfiability problems $\CTLSAT(T,B)$ and $\CTLstarSAT(T,B)$,
their complexity has been determined---with respect to the set of Boolean operators, completely---in \cite{memuthvo08}.

Throughout this paper, we have assumed that the formula \emph{and} the Kripke structure are part of the input and
can vary in size. The case where the complexity is measured in terms of the size of the formula (or the Kripke structure),
and the other component is assumed to be fixed, is usually referred to as \emph{specification complexity} (or \emph{system complexity}).
Our approach measures the joint complexity. In applications, where usually the structure is significantly bigger
than the specification, an analysis of the system complexity becomes interesting.
For system complexity, model checking for \CTL and \CTLstar is already \NL-complete \cite{BVW94,KuVaWo00}.
Still, the hope for a significant drop of system complexity justifies
a systematic analysis of fragments of these logics.

% TS: Das hatte ein TIME-Gutachter geschrieben:
%
% My main comment about the results is that the author does not make the important distinction
% between {\em system complexity} (complexity in terms of the Kripke structure, assuming the specification is fixed),
% \em specification complexity} (complexity in terms of the specification, assuming the structure is fixed),
% and  {\em joint complexity} (complexity in terms of both inputs). Since the structure is much much
% bigger than the specification, the important analysis is the one of the system complexity. It is known that
% the system complexity for CTL is NLOGSPACE-complete! (Bernholtz et al. CAV94) The fact the specification
% complexity or the mixed complexity are P-complete is not too interesting in practice. Therefore, it is
% misleading and wrong to conclude that "approaching CTL model checking by parallelism will not result in the desired speed up";
% we care about speed up in respect to the size of the structure, and not the specification, and the results in the paper refer to t!
% he specification.

%\section*{Acknowledgment}

\bibliographystyle{alpha}
\bibliography{thi-hannover}

\end{document}